\documentclass[runningheads]{llncs}

\usepackage{graphicx}

\usepackage{algorithm,algpseudocode}
\usepackage{algorithm}
\usepackage{algpseudocode}
\usepackage[linesnumbered,ruled,vlined,algo2e]{algorithm2e}

\usepackage{booktabs}
\usepackage{tabularx}
\usepackage{gensymb}
\usepackage{graphicx}
\usepackage{enumerate}
\usepackage{hyperref}
\usepackage{multirow}
\usepackage{xcolor}
\usepackage[labelformat=simple]{subcaption}

%\usepackage{comment}
%\usepackage{booktabs}
%\newcommand{\data}[2]{$\underset{(#1)}{#2}$}
%\usepackage{pgf,tikz,pgfplots}
%\pgfplotsset{compat=1.15}
\usepackage{mathrsfs}
\usepackage{comment}

\usepackage{multicol}
\usepackage{graphicx,tipa}
\usepackage{amsfonts}
\usepackage{amsmath}
\usepackage{tcolorbox}

\begin{document}

\title{Treasure Hunt in Graph using Pebbles}
 \author{Adri Bhattacharya\inst{1} \and
 Barun Gorain\inst{2} \and
 Partha Sarathi Mandal\inst{1}\orcidID{0000-0002-8632-5767}}
% %
 \authorrunning{Bhattacharya et al.}
% % First names are abbreviated in the running head.
% % If there are more than two authors, 'et al.' is used.
% %
 \institute{Indian Institute of Technology Guwahati, 781039, India\\ \email{\{a.bhattacharya, psm\}@iitg.ac.in}
 \and
 Indian Institute of Technology Bhilai, India\\
 \email{barun@iitbhilai.ac.in}
% %\url{http://www.springer.com/gp/computer-science/lncs} 
% % \and
% % ABC Institute, Rupert-Karls-University Heidelberg, Heidelberg, Germany\\
 }
% %
\maketitle              % typeset the header of the contribution
\begin{abstract}
In this paper, we study the treasure hunt problem in a graph by a mobile agent. The nodes in the graph $G=(V,E)$ are anonymous and the edges incident to a vertex $v\in V$ whose degree is $deg(v)$ are labeled arbitrarily as $0,1,\ldots, deg(v)-1$. At a node $t$ in $G$ a stationary object, called {\it treasure} is located. The mobile agent that is initially located at a node $s$ in $G$, the starting point of the agent, must find the treasure by reaching the node $t$. The distance from $s$ to $t$ is $D$. The {\it time} required to find the treasure is the total number of edges the agent visits before it finds the treasure. The agent does not have any prior knowledge about the graph or the position of the treasure. An oracle, that knows the graph, the initial position of the agent, and the position of the treasure, places some pebbles on the nodes, at most one per node, of the graph to guide the agent towards the treasure.

This paper aims to study the trade-off between the number of pebbles provided and the time required to find the treasure. To be specific, we aim to answer the following question:

\begin{itemize}
    \item ``What is the minimum time for treasure hunt in a graph with maximum degree $\Delta$ and diameter $D$ if $k$ pebbles are placed? "
\end{itemize}

We answer the above question when $k<D$ or $k=cD$ for some positive integer $c$. We design efficient algorithms for the agent for different values of $k$. We also propose an almost matching lower bound result for  $k<D$.

\textbf{Keywords}: Treasure Hunt,
Mobile Agent,
Anonymous Graph,
Pebbles,
Deterministic Algorithms 
\end{abstract}

\section{Introduction}

\subsection{Background and Motivation}\label{motivation}

Treasure hunt problem is well studied in varying underlying topologies such as graphs and planes \cite{bouchard2020almost,bouchard2020deterministic,emek2015many,gorain2021pebble,miller2015tradeoffs,pelc2018reaching,pelc2018information,pelc2019cost}.
In this paper, we have delved into the treasure hunt problem using mobile agents on graphs. The main idea of this problem is that the mobile agent starting from a position has to find a stationary object, called treasure, placed at some unknown location in the underlying topology. There are many real-life applications to this problem. Consider a scenario, where a miner is stuck inside a cave and needs immediate assistance. In network applications, consider a network containing a virus, the agent in this case is a software agent whose task is to find the virus located in some unknown location inside the network.
For any graph with maximum degree $\Delta$. The agent can find the treasure, located $D$ distance away by performing a simple breadth-first search (BFS) technique in $\mathcal{O}(\Delta^D)$ time. But this is naive strategy is expensive as many real-life problems require a much more efficient solution. Suppose a person is stuck inside a building that has caught fire. He needs to find the fire exit and then evacuate within a short period of time. These kinds of emergencies require a faster solution. The person needs external help guiding him toward the fire exit. Similar to that finding the treasure, the agent needs some external help to guide the agent toward the treasure. That help is provided to the mobile agent by the oracle. The external information provided by the oracle is in the form of pebbles placed at the graph's vertices (nodes) \cite{gorain2021pebble}, also termed as {\it advice}. This advice guides the agent towards the treasure. The pebbles are placed at the nodes, so the agent visiting the nodes gain some knowledge and find the treasure using that information. The oracle places pebbles, at most one at a node, by knowing the underlying graph topology, initial position of the agent, and treasure's location. 
Recently studied in the paper by Gorain et al. \cite{gorain2021pebble} the treasure hunt problem in anonymous graph. They studied the question, what is the fastest treasure hunt algorithm regardless of any number of pebbles placed. In that paper, they obtained a faster algorithm that finds the treasure, regardless of the number of pebbles placed. So, now a natural question arises which they did not address. Given $k$ many pebbles, what is the fastest possible treasure hunt algorithm.

In this paper, we find the solution of the question:
\textit{Given $k$ pebbles what is the fastest algorithm which solves treasure hunt problem in anonymous graph. }

\subsection{Model and Problem Definition}\label{model}

Search domain by the agent for finding the treasure is considered as a simple undirected connected graph $G=(V,E)$ having $n=|V|$ vertices which are anonymous, i.e., they are unlabeled. The vertices are also termed as nodes in this paper. An edge $e=(u,v)$ must have two port numbers one adjacent to $u$, which is termed as {\it outgoing} port from $u$  and the other adjacent to $v$, termed as {\it incoming} port of $v$ (refer the edge $(v_i,v_{i+1})$ in Fig. \ref{Fig-Tree2Dpebble}, where $\rho_4$ is the outgoing port from $v_i$ and $\rho_0$ is the incoming port of $v_{i+1}$). $\Delta$ is denoted as the maximum degree of the graph. Initially, the agent only knows the degree of the initial node. A node $u\in V$ with $deg(u)$ is connected with its neighbors, $u_0,u_1,\cdots, u_{\{deg(u)-1\}}$ via port numbers which have arbitrary but fixed labelings $\rho_0, \rho_1,\cdots,\rho_{deg(u)-1}$, respectively. Agent visiting a node can read the port numbers when entering and leaving a node, as stated in the paper \cite{dereniowski2012drawing}. Moreover, when the agent reaches a node $v$ from a node $u$, it learns the outgoing port from $u$ and the incoming port at $v$, through which it reaches $v$.
% When the agent visits the node $v$ from $u$ using the edge $(u,v)$ it denotes the port of the edge $(v,u)$ as $\rho_0$ at $v$, whereas the other $deg(v)-1$ neighbors are connected via ports $\rho_1,\cdots,\rho_{deg(v)-1}$.
The first half of $u$'s neighbors are the nodes corresponding to the outgoing port numbers $\rho_0,\rho_1,\cdots,\rho_{\frac{deg(u)}{2}-1}$, whereas the second half of $u$'s neighbors are the nodes corresponding to the outgoing port numbers $\rho_{\frac{deg(u)}{2}},\cdots,\rho_{deg(u)-1}$. The agent is initially placed at a node $s$. The treasure $t$ is located on a node of $G$ at a distance $D$ from $s$, which is unknown to the agent. The oracle places the pebbles at the nodes of the graph $G$, in order to guide the agent towards the treasure. At most one pebble is placed at a node. The agent has no prior knowledge about the underlying topology nor it has any knowledge about the position of treasure, pebble positions as well as the number of pebbles deployed by the oracle. The agent has no knowledge about the value of $D$ as well. The agent can only find the treasure or pebble whenever it reaches the node containing the treasure or pebble. Distance is considered as the number of edge traversal. We denote the shortest distance between any two nodes by {\it dist}, i.e., between nodes $u,v\in G$ as $dist(u,v)$, hence $dist(s,t)=D$.  Moreover the agent has unbounded memory. The {\it time} of treasure hunt is defined as the number of edge traversed by the agent from its initial position until it finds the treasure.

\subsection{Contribution}\label{contribution}
 We study the trade-off between the number of pebbles provided by the oracle and the associated time required to find the treasure. The contributions in this paper is mentioned below.

\begin{itemize}
\item For $k<\frac{D}{2}$ pebbles, we propose an algorithm that finds the treasure in a graph at time $\mathcal{O}(D \Delta^{\frac{D}{(2\eta+1)}})$, where $\eta=\frac{k}{3}$.
\item For $\frac{D}{2} \le k < D$, we propose a treasure hunt algorithm with time complexity $O(k \Delta^{\frac{D}{k+1}})$.
\item In case of bipartite graphs, the proposed algorithm for treasure hunt has time complexity $\mathcal{O}(k\Delta^{\frac{D}{k}})$ for $0 \le k <D$.
\item For $k=c D$ where $c$ is any positive integer, we give an algorithm that finds the treasure in time 
    $\mathcal{O}\left[cD{(\frac{\Delta}{2^{{c}/{2}}})}^2 + cD\right]$ 

\item  We propose a lower bound result  $\Omega((\frac{k}{e})^{\frac{k}{k+1}}(\Delta-1)^{\frac{D}{k+1}})$ on time of treasure hunt  for $0 \le k<D$.
    
\end{itemize} 
\begin{comment} 

\begin{itemize}
    \item For $k<\frac{D}{2}$ pebbles placed by the oracle, we have the following results:
    \begin{itemize}
        
        %\item For general graph with one marker our algorithm finds the treasure at cost $\mathcal{O}(k D \Delta^{\frac{D}{k+1}})$.
        \item  We propose an algorithm that finds the treasure in a graph at time $\mathcal{O}(D \Delta^{\frac{D}{(2\eta+1)}})$, where $\eta=\frac{k}{3}$.
        \item We present an algorithm that finds the treasure in a bipartite graph at time $\mathcal{O}(k\Delta^{\frac{D}{k}})$.
\end{itemize}
    \item For $\frac{D}{2} \le k < D$, we present a close to optimal algorithm for general graph.
    \item We prove that, any algorithm must take  $\Omega((\frac{k}{e})^{\frac{k}{k+1}}(\Delta-1)^{\frac{D}{k+1}})$ time to find the treasure in an anonymous graph, with the help of $k$ pebbles, where $k<D$.
    \item For $k\ge D$, we give an algorithm that finds the treasure in time \\
    $\mathcal{O}\left[cD{(\frac{\Delta}{2^{{c}/{2}}})}^2 + cD\right]$ in a general graph, where $k=cD$ and $c\ge 1$.
\end{itemize}
  \end{comment}

\subsection{Related Work}

Several works have been done on searching for a target by one or many mobile agents under varied underlying environments. The underlying environment can be a graph or a plane, also the search algorithm can be deterministic or randomized. The paradigm of \textit{algorithm with advice} was mainly studied for networks, where this advice (or information) enhances the efficiency of the solutions in \cite{emek2011online,fraigniaud2008tree,fraigniaud2010communication}. In the past few decades, the problem of treasure hunt has been explored in many papers, some of them are \cite{bouchard2020deterministic,miller2015tradeoffs,pelc2018information}. The book by Alpern et al. \cite{alpern2006theory} provides a brief survey about searching and rendezvous problem, for an inert target, where the target and the agent are both dynamic in nature and they cooperate among themselves to meet. This book mostly deals with randomized search algorithms. The treasure hunt problem is mainly studied in a continuous and discrete model. Bouchard et al. \cite{bouchard2020deterministic} studied the problem of treasure hunt in the Euclidean plane, where they showed an optimal bound of $\mathcal{O}(D)$ with angular hints at most $\pi$. Pelc et al. \cite{pelc2019cost} provided a trade-off between time and information of solving the treasure hunt problem in the plane. They showed optimal and almost optimal results for different ranges of vision radius. Pelc et al. \cite{pelc2018information} gave an insight into the amount of information required to solve the treasure hunt in geometric terrain at $\mathcal{O}(L)$- time, where $L$ is the shortest path of the treasure from the initial point. Further Pelc \cite{pelc2018reaching} investigated the treasure hunt problem in a plane with no advice for both static and dynamic cases. 
%In \cite{emek2015many,fricke2016distributed,langner2016overcoming} they studied the treasure hunt problem with multiple agents in planes and grids, also in \cite{emek2015many,langner2016overcoming} the agents are considered to have bounded memory. 
Spieser et al. \cite{6426279} studied the problem, where multiple self-interested mobile agents compete among themselves to capture a target distributed in a ring, they have minimal sensing capability and limited knowledge of where the target is positioned. In \cite{miller2015tradeoffs}, studies the amount of information available to the agent at priori and the time of rendezvous and treasure hunt problems in graph is studied.  Moreover, Dieudonné et al. \cite{dieudonne2015deterministic} explored the problem of rendezvous in a plane, where the adversary controls the speed of the agent. Georgiou et al. \cite{georgiou2016search} studied the treasure evacuation problem in unit disk with a single robot. Gorain et al. \cite{gorain2021pebble} studied the treasure hunt problem in the graphs with pebbles and also provided a lower bound of the run time complexity using any number of pebbles. Our problem is a more generalized version of the paper by Gorain et al. \cite{gorain2021pebble}, where they have used an infinite number of pebbles to give an almost optimal algorithm with time $\mathcal{O}(D\log\Delta+\log^3\Delta)$. %In our case we fix the number of pebbles and try to find an efficient algorithm with respect to those fixed number of pebbles.
This paper tries to find an efficient algorithm for a given number of pebbles.

The rest of the paper is organized as follows. In section \ref{k<D}, given $k<D$ pebbles, we provide treasure hunt algorithm and its analysis for a general graph. Further, in section \ref{k>D}, given $k\ge D$ we propose the treasure hunt algorithm for a general graph. In section \ref{LowerBound}, we have given the lower bound for the case $k<D$. Finally, concluded in section \ref{conclude}.

In the following sections, we propose different algorithms for different graph topology and their analysis.

\section{Treasure Hunt Algorithm when $k<D$}\label{k<D}

In this section, we provide algorithms and their analysis for the case when the number of pebble $k$ is less than $D$.
We first give an idea for tree topology and then extend our idea for the bipartite graph. Then further modify our idea for the general graph, with the introduction of a new paradigm termed as \textit{markers}.

\subsubsection{Idea of the pebble placement and algorithm in a tree network:}\label{tree}
 Let $T$ be a rooted tree with root $s$ and a node that is $i$ distance from $s$ is located at the  level $L_i$ of $T$. As the treasure $t$ is located $D$ distance away from $s$, hence $t$ 
is residing at some node 
 at level $L_D$. Consider the shortest path ($P$) from $s$ to $t$ is $P=v_0,v_1,\cdots,v_{D}$.

 \noindent \textbf{Without Pebble:} Consider the scenario, when there are no pebble placed in the underlying tree topology by the oracle. To find the treasure starting from $s$, the agent obeys a simple strategy. It naively searches all its neighbors and further down the level, i.e., its children, grandchildren and so on until the treasure is encountered. Now in the worst case, each node may have degree $\Delta$, hence the time of finding the treasure can be as worst as $\mathcal{O}(\Delta^D)$. To overcome this high time complexity, let us analyze how the placement of pebbles by the oracle helps in efficiently reaching the treasure. 
 
 \noindent \textbf{With Pebble:} Suppose the oracle places only a single pebble at a node in the first level $L_1$. In this case, after searching at most $\Delta$ neighbors of $s$, the agent will find a pebble at a node $v_1$ (say) at the level $L_1$, from which it further searches only the subtree rooted at $v_1$. This reduces the time to $\mathcal{O}(\Delta^{D-1})$. Now, if the pebble is placed at a node in the level  $L_i$ then the time of finding the treasure is $\mathcal{O}(\Delta^{D-i})$. So, it is simple to observe that this time is minimum when $i=\frac{D}{2}$. In this case, the treasure can be found at time $\mathcal{O}(\Delta^{\frac{D}{2}})$ which is a significant improvement from $\mathcal{O}(\Delta^D)$ after introducing a single pebble in an appropriate level. Further with the introduction of two pebbles and placing them at $\frac{D}{3}$ distance apart from each other, i.e.,  at a node in the levels $L_{\frac{D}{3}}$ and $L_{\frac{2D}{3}}$, respectively. The time taken to find the treasure further reduces to $\mathcal{O}(\Delta^{\frac{D}{3}})$. So, if the oracle places $k$ pebbles, $\frac{D}{k+1}$ distance apart along the path $P$. The treasure can be found in time $\mathcal{O}(\Delta^{\frac{D}{k+1}})$.  

But the aforementioned naive idea cannot be directly applied to general graphs. As the nodes in the graph are anonymous, i.e., there is no id for the nodes. The agent can't distinguish between a node that is visited or not. So, there is an issue to deal with: 
\begin{itemize}
    \item Suppose the agent is currently searching from some node containing a pebble at level $L_i$, then how to determine the fact that the pebble found is at level $L_j$, where $i<j$ but not $j<i$ or $j=i$.
\end{itemize}

Hence if these two issues are not resolved then in the worst case the agent may move inside a cycle for infinite time. So, in the next two sections, we deal with the issues related to the general graph. We provide algorithms and their analysis for the agent to find treasure when $\frac{D}{2}\le k<D$ and $k<\frac{D}{2}$. 
 
\subsection{$\frac{D}{2}\leq k < D$}\label{GeneralGraph-lessD}

In this case, if the oracle places a pebble along the path $P$ at alternative levels, i.e., at the nodes $v_j$, where $1\leq j \leq D$ and $j$ is even. The agent searches every possible path of length $\frac{D}{k+1}$ (=$l$) until a pebble or the treasure is encountered from {\it SearchNode}. The length of the path between two pebbles, in this case, is at most $2$, as $k\geq \frac{D}{2}$. So, by searching a path of length at most 2 from {\it SearchNode} (i.e., $l\le 2$), the agent cannot return to itself. The reason is, the graph $G$ has no multiple edges and self loops. Also, it cannot go to the previous {\it SearchNode}. It is because suppose the {\it SearchNode} is at level $L_i$, then all the incoming ports from the level $L_{i-1}$ to the {\it SearchNode} is already saved. Further, the agent cannot use these saved ports while searching a BFS of length $l$ from {\it SearchNode}. Hence the length of the path from $L_i$ to any node in $L_k$ (where $k<i$) containing pebble is at least $l+1$. So, both the issues of circling its way back to itself and going back are restricted. Hence, the agent can only move forward along $P$ and ultimately finds the treasure. The time taken to search all possible paths of length $\frac{D}{k+1}$ is $\mathcal{O}(\Delta^\frac{D}{k+1})$ and the searching is done from each of the $k$ pebbles. Hence the total time to find the treasure is at most $\mathcal{O}(k\Delta^\frac{D}{k+1})$.

Now in the case of a general graph with $\frac{D}{2}\leq k < D$ pebbles, the placement of pebbles at alternate levels ensures that there is no returning back and as well as going back to the {\it SearchNode}. But this fact is not valid for general graphs, when $k<\frac{D}{2}$ pebbles are provided. This is explained with the help of the following example. 

\noindent\textbf{Example:}\label{example} 
Consider the example in Fig. \ref{Fig-Impossible}, where the {\it SearchNode} is $v_3$ and $l=3$. The correct path from $v_3$ to the treasure is along $v_3\longrightarrow v_4 \longrightarrow v_5 \cdots v_j \cdots \longrightarrow t$. But when it performs a BFS of length $l$ from {\it SearchNode}. It will not be possible for the agent to distinguish between the paths $v_3\longrightarrow v_4 \longrightarrow v_5 \longrightarrow v_6$ and $v_3\longrightarrow u_4 \longrightarrow u_5 \longrightarrow v_3$. In both the cases after traversing a {\it dist} of $l$ from $v_3$, the agent encounters a pebble. In the worst case, the agent may traverse this wrong path each time and never reach the treasure. Moreover, the number of pebbles must be at least 2, as with a single pebble it is not possible for the agent to find the treasure. Consider the Fig. \ref{Fig-Impossible}, where a pebble is placed at only $v_3$. In this case, the agent may never find treasure. It is because, at every search from $v_3$, the agent may circle its way back to $v_3$ (as the nodes and pebbles are anonymous) rather than encountering the treasure. 
    
\begin{figure}[h]
\begin{minipage}[9cm]{.45\textwidth}

\centering
\includegraphics[width=0.6\textwidth]{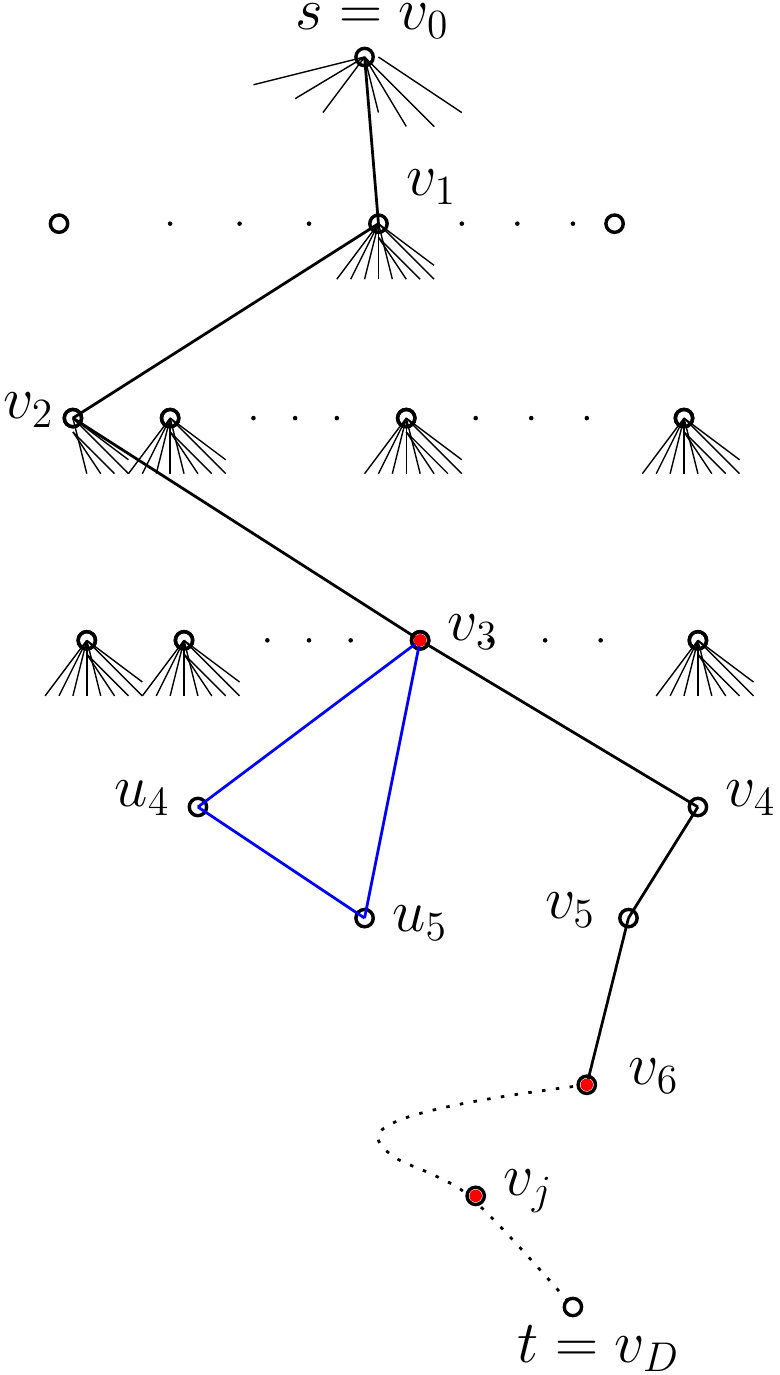}
\caption{Impossibility case in General Graph}
\label{Fig-Impossible}
\end{minipage}
\hspace{0.5cm}
\begin{minipage}[9cm]{.45\textwidth}
\centering
\includegraphics[width=1.2\textwidth]{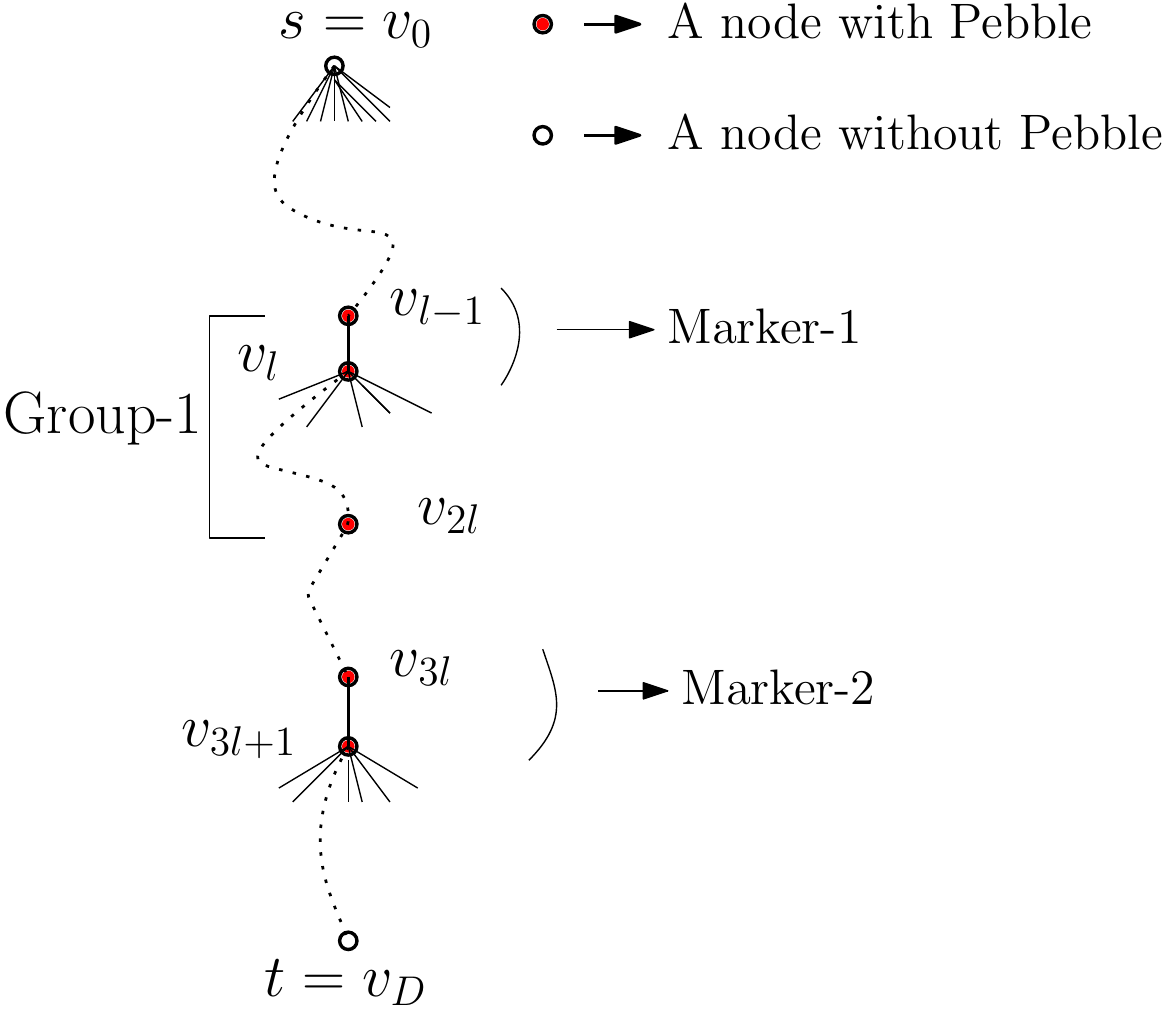}
\caption{Pebble Placement in General Graph with Multiple Marker}
\label{Fig-GeneralGraphwithMultiMarker}
\end{minipage} 

\end{figure}

\subsubsection{Marker:} To deal with this scenario we create a notion of markers. The idea of {\it markers} came from the concept of colors. Colors are always helpful in distinguishing certain characteristics. In our case it helps to identify whether a node is already visited. As there is no concept of using colors in our model. We replicate the idea of colors by using certain combination of pebbles, this combination is termed as {\it Marker}. To denote a marker, the oracle places two pebbles adjacent to each other along the path $P$. When the agent finds a pebble placed in one of the adjacent nodes of the current pebble, it understands that it has found a marker. We generalize our idea of finding treasure in general graphs with markers.

\subsection{$k<\frac{D}{2}$}\label{GeneralGraph-lesshalfD}

 As discussed above, the introduction of a marker helps the agent distinguish between visited and non-visited pebbles. In this section, we discuss how multiple marker helps the agent to find the treasure in a general graph.
The pebble placement strategy is discussed as follows. 

We define a group as a marker together with the immediate next pebble. Markers and pebbles are placed alternatively as shown in Fig. \ref{Fig-GeneralGraphwithMultiMarker}. Let $l$ be the distance between a marker and a pebble in a group. Further two such groups are also placed at $l$ distance apart. We denote $\eta$ to be the number of such groups. We can differentiate the following cases:
\begin{itemize}
    \item \textit{Case-1:} ($k=3\eta$) In this case, the oracle places the markers and the pebbles $l$ distance apart along the path $P$ (refer Fig. \ref{Fig-GeneralGraphwithMultiMarker}), where $l=\frac{D-\eta}{2\eta+1}$.
    \item \textit{Case-2:} ($k=3\eta+1$) Similarly, in this case $l=\frac{D-\eta}{2\eta+2}$.
    \item \textit{Case-3:} ($k=3\eta+2$) Similarly, in this case as well, $l=\frac{D-\eta-1}{2\eta+2}$.
\end{itemize}

    % \item As we know that given $k$ pebbles the optimal placement along a path of length $D$ is $\frac{D}{k+1}$ distance apart from each other. Now in this case, alternatively one marker and one pebble is placed along the path $P$. The marker consisting of two pebbles each. So, given $k$ pebbles, the oracle place them at $\frac{2k}{3}$ positions along the path $P$. It is because for two positions along $P$ the oracle is placing three pebbles. Hence there are $\frac{2k}{3}$ positions. So, the pebbles and markers are placed at $\lceil\frac{D}{\frac{2k}{3}+1}\rceil$ (=$l$) distance apart from each other alternatively along $P$ (refer Fig. \ref{Fig-GeneralGraphwithMultiMarker}).
    % \item The first marker is placed at the nodes $v_{l-1}$ and $v_{l}$, respectively along $P$. Then a pebble is placed at the node $v_{2l}$ on $P$. Further the second marker is placed at the nodes $v_{3l}$ and $v_{3l+1}$, respectively on $P$. In this manner one pebble is placed after the marker $l$ distance apart from each other.

% \begin{figure}[h]
% \begin{minipage}[9cm]{.45\textwidth}

% \centering
% \includegraphics[width=0.6\textwidth]{Fig-2.pdf}
% \caption{Impossibility case in General Graph}
% \label{Fig-Impossible}
% \end{minipage}
% \hspace{0.5cm}
% \begin{minipage}[9cm]{.45\textwidth}
% \centering
% \includegraphics[width=1.2\textwidth]{Fig-3.pdf}
% \caption{Pebble Placement in General Graph with Multiple Marker}
% \label{Fig-GeneralGraphwithMultiMarker}
% \end{minipage} 
% \end{figure}

Below is a detailed description of the algorithm \textsc{TreasureHuntForGraphWithMarker} that the agent executes to find the treasure.
\begin{enumerate}
    
   \item \label{step-1Marker} The agent starting from $s$, sets {\it SearchNode}=$s$ and performs a BFS in increasing lexicographic order of outgoing port numbers until a treasure or pebble is found.
    \item \label{step-2Marker}If the treasure is found, the algorithm terminates.
    \item \label{step-3Marker}If the treasure is not found and a pebble is found. Then the agent sets the {\it dist} between $s$ and this node containing the pebble as $l-1$. Also consider that the node containing the pebble is $v_{l-1}$ at $L_{l-1}$-th level.
    
    \item \label{step-4Marker}Further from $v_{l-1}$, the agent performs two tasks. 
    \begin{enumerate}
        \item \label{step-4aMarker}Firstly it searches the neighbors of $v_{l-1}$, and finds another pebble at the node $v_l$ in level $L_{l}$. Further, it stores the length of the path $l$ from {\it SearchNode} to $v_l$ and identifies a marker is found.
        \item \label{step-4bMarker}Secondly it stores the incoming port number $\rho_{l-1}$ of the incoming edge $(v_{l-1},v_l)$.
        
    \end{enumerate}
    \item \label{step-5Marker}Reset {\it SearchNode}=$v_l$.
     \item \label{step-6Marker}The agent performs a BFS of length $l$ from {\it SearchNode} until the treasure or pebble is found.
     \item \label{step-7Marker}If the treasure is found, then the algorithm terminates.
     \item \label{step-8Marker}If the treasure is not found and whenever a pebble is found, there are two possibilities:
     \begin{enumerate}
         \item \label{step-8aMarker}\textit{Possibility-1:} The agent has returned back to {\it SearchNode} as the underlying graph topology is a general graph.
         \item \label{step-8bMarker}\textit{Possibility-2:} The agent has encountered a new pebble along path $P$.
     \end{enumerate}
     Now to understand which of these possibilities the agent has encountered. The agent travels the stored sequence of a port number, in this case $\rho_{l}$ of length $1$. If in this traversal marker is found, then the agent has encountered {\it possibility-1} and searches a different path, otherwise if the marker is not found then it is {\it possibility-2}.
     \item \label{step-9Marker}If the agent encounters {\it possibility-2}, i.e., it has found a pebble for the first time at the node $v_{2l}$ in $L_{2l}$-th level. The agent performs the following tasks.
     \begin{enumerate}
         
         \item \label{step-9aMarker}It stores the sequence of incoming port numbers of the shortest path from $v_{l-1}\longrightarrow {\it SearchNode} \longrightarrow v_{2l}$ of length $l+1$.
         \item \label{step-9bMarker}Completes the BFS search at the $L_{{2l}-1}$ level. Whenever there is another pebble encounter. The agent stores the incoming port number of the incoming edge of that node containing a pebble, i.e., the edge with the node in level $L_{{2l}-1}$ and $v_{2l}$. Sets {\it SearchNode}=$v_{2l-1}$ 
     \end{enumerate}
     \item \label{step-10Marker}From {\it SearchNode} it performs a BFS of length $l$ using the ports except for the stored incoming ports until treasure or pebble is encountered.
     \item \label{step-11Marker}If treasure encountered go to step \ref{step-7Marker}.
     \item \label{step-12Marker}If pebble encountered then perform only step \ref{step-8Marker}. Further if {\it Possibility - 2} arises. Then search the neighbor of current {\it SearchNode} (i.e., $v_{3l}$).
     \item \label{step-13Marker}If a pebble is found at one of its neighbor node $v_{3l+1}$,  then identify a new marker is found and store the incoming port $\rho_{3l}$ of the incoming edge $(v_{3l},v_{3l+1})$. Then go to step \ref{step-5Marker}.
     \item \label{step-14Marker}Otherwise if no pebble is found then go to step \ref{step-9Marker}.
     
\end{enumerate}

\begin{lemma} \label{Correctness-GeneralGraphmultiMarker}
Given $ k<\frac{D}{2}$ pebbles, the agent following \textsc{TreasureHuntForGraphWithMarker} algorithm successfully finds the treasure in a general graph with the help of multiple marker.
\end{lemma}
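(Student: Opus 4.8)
The plan is to prove correctness by induction on the group index, showing that the agent, having arrived at the pebble beginning the $i$-th group, always successfully advances to the pebble beginning the $(i+1)$-th group (or finds the treasure), so that after at most $\eta$ such advances together with the final length-$l$ BFS the agent reaches $t$. The two properties I would track in the inductive invariant are: (a) when a BFS of length $l$ is launched from a \emph{SearchNode} at level $L_{il}$ (a pebble on $P$), the agent is guaranteed to detect \emph{some} pebble at \emph{distance exactly} $l$ along $P$, namely the next marker/pebble, because the oracle placed them exactly $l$ apart; and (b) the stored incoming-port information forbids the agent from ever walking back toward a lower level, so every pebble encounter that is \emph{not} a return to \emph{SearchNode} is a genuine forward pebble $v_{(i+1)l}$ on $P$.

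The heart of the argument is to justify that the marker mechanism correctly resolves \emph{Possibility-1} versus \emph{Possibility-2} in step~\ref{step-8Marker}, i.e.\ that the agent never mistakes a spurious return into a cycle for genuine progress, and never mistakes progress for a return. First I would argue the \emph{soundness} of the test: a marker is, by construction, two pebbles on adjacent nodes of $P$, so a \emph{SearchNode} that begins a group has a neighbor carrying a pebble reachable by the one stored port; if the agent, after a length-$l$ search, has looped back to the \emph{SearchNode}, then traversing that single stored port $\rho_l$ must reveal the companion pebble of the marker, whereas if it has reached a fresh pebble $v_{(i+1)l}$ on $P$, that node begins a \emph{pebble}-only position (not a marker under the group pattern of Fig.~\ref{Fig-GeneralGraphwithMultiMarker}), so the single-port traversal finds no pebble. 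I would have to verify this dichotomy against the precise placement pattern — a marker, then a lone pebble at distance $l$, then the next marker at distance $l$ — and check that the three cases $k=3\eta,3\eta+1,3\eta+2$ with their respective values of $l$ all preserve the invariant down to the terminal segment.

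Next I would handle the no-backtracking claim rigorously. The key structural fact, inherited from the $\frac{D}{2}\le k<D$ discussion, is that once the agent saves every incoming port through which it could descend from the current level back to the previous one, a BFS of length $l$ that avoids those saved ports cannot terminate at a node of level strictly below the \emph{SearchNode}; hence any pebble it meets at BFS-depth $l$ lies at level $L_{(i+1)l}$, i.e.\ strictly forward on $P$. I would combine this with property~(a) to conclude that the forward pebble is always found, and with the marker test to conclude it is correctly \emph{identified} as forward. A subtle point requiring care is that, in a general graph, a single length-$l$ BFS may meet several pebbles; I would argue that steps~\ref{step-9Marker}--\ref{step-10Marker} correctly collect all incoming ports at level $L_{(2l)-1}$ (and analogously at each subsequent group boundary) so that the newly reset \emph{SearchNode} $=v_{(i+1)l-1}$ retains a complete ``forbidden port'' set, keeping the invariant intact for the next round.

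The main obstacle I anticipate is precisely the correctness of the \emph{Possibility-1/Possibility-2} disambiguation in full generality: I must rule out a pathological situation in which, while looping inside a cycle, the agent lands on a node that happens to carry a pebble \emph{and} happens to have a pebbled neighbor via the one stored port, thereby faking a marker and being misclassified. Ruling this out relies entirely on the fact that the oracle controls \emph{all} pebble placements and places at most one pebble per node along a carefully spaced pattern on $P$, with no extraneous pebbles elsewhere; I would make this ``only-on-$P$, spaced-by-$l$'' placement explicit and use it to show the marker configuration is unique to genuine group-start nodes. A secondary difficulty is bookkeeping at the boundary: I must confirm that after exactly $\eta$ forward advances the remaining distance to $t$ is at most $l$, so that the final BFS of step~\ref{step-6Marker}/\ref{step-10Marker} reaches the treasure, and that this holds for each of the three residue cases of $k \bmod 3$ with the stated formulas for $l$.
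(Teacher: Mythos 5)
Your proposal is correct in outline and follows essentially the same route as the paper's proof: the same four invariants appear (a pebble is always found within distance $l$ of the current \emph{SearchNode}; saved incoming ports force every genuinely new pebble to lie at a strictly higher level, so no backtracking; the marker/stored-port test disambiguates \emph{Possibility-1} from \emph{Possibility-2}; the last pebble lies within distance $l$ of the treasure), merely repackaged as an induction on the group index. One caution: your explicit single-port dichotomy is sound only when the \emph{SearchNode} is the second node of a marker (there the fresh forward pebble is a lone pebble with no pebbled neighbour, since all pebbles lie on the shortest path $P$ spaced at least $l\ge 2$ apart); when the \emph{SearchNode} is the lone pebble of a group, the forward pebble $v_{3l}$ is itself part of the next marker, so the test must instead be the traversal of the entire stored incoming-port sequence back toward the previous marker --- exactly the second sub-case of the paper's argument, which your plan covers only implicitly with ``analogously.''
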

\begin{proof}
To prove the correctness of our algorithm, we first ensure that while searching from {\it SearchNode} the agent always finds a new pebble, i.e., it does not encounter an already visited pebble. Secondly, we ensure that the agent will always find the same pebble at least once, by searching all possible paths of length $l$ from {\it SearchNode}. Finally, we ensure that the treasure is found, when the agent searches all possible paths of length $l$ from the node containing the last pebble placed along the path $P$. They are resolved in the following manner. 
\begin{itemize}
\item {\it Circle back to current {\it SearchNode}:} If the current {\it SearchNode}=$v_k$ is part of a marker, then it stores the incoming port number $\rho_k$ of the edge $(v_k,v_{k+1})$ (refer step \ref{step-4bMarker} of \textsc{TreasureHuntForGraphWithMarker} algorithm). Further, it performs a BFS from $v_{k+1}$, whenever a pebble is encountered, it checks the node with port $\rho_k$ (refer step \ref{step-8Marker} of \textsc{TreasureHuntForGraphWithMarker} algorithm) from the newly encountered pebble. If another pebble is found, i.e., a marker is found, then it concludes that the agent has circle back to current {\it SearchNode}, i.e., $v_k$. In this case, the agent cancels this path and continues its search along different path (refer {\it Possibility-1} in step \ref{step-8aMarker} of \textsc{TreasureHuntForGraphWithMarker} algorithm).

Otherwise, if the current {\it SearchNode}=$v_k$ is not part of marker, then the agent has a stored sequence of incoming port numbers of length $jl+1$ (refer step \ref{step-9Marker} of \textsc{TreasureHuntForGraphWithMarker} algorithm), where $j$ is the number of pebbles along $P$ from the last marker to the {\it SearchNode}. So, the agent from {\it SearchNode} performs a BFS of length $l$. Whenever a pebble is encountered. The agent traverses the already stored sequence of port numbers to check whether it has reached the marker. If the marker is reached, it concludes that the agent has circled its way back to {\it SearchNode} and searches a different path. Otherwise, the agent has moved to the new {\it SearchNode}. Hence the use of markers prevents the agent to circle its way back to itself.

\item {\it Circle back to previous {\it SearchNode}:} Let at some iteration of the algorithm, current {\it SearchNode} is $v_i$ at level $L_i$ of the BFS tree. Then the pebble encountered while searching all possible paths of length $l$ should be at level $L_j$, where $j>i$. This is because, the length of the path from the last {\it SearchNode} to $v_i$ is at least $l+1$. The reason being, the agent does not travel through the edges adjacent to $v_i$ whose ports are saved (refer step \ref{step-10Marker} of \textsc{TreasureHuntForGraphWithMarker} algorithm). This implies that the new search node is always at $L_j$-th level, where $j>i$. So, this means that the agent cannot go back to already visited pebbles. 

\item {\it Guaranteed finding of pebble at length $l$ from current {\it SearchNode}:} The oracle places the pebbles and markers $l$ distance apart. So, the agent while searching from {\it SearchNode}, all possible paths of length $l$, must encounter one pebble.
    
\item {\it Guaranteed finding of treasure at length $l$ from $k$-th pebble:} The node containing the last pebble is chosen in such a manner by the oracle, such that the {\it dist} between treasure and that node is at most $l$. So, the agent while searching from {\it SearchNode} all paths of length $l$ must find the treasure.
\end{itemize}
All points above guarantee that the agent successfully finds the treasure.
\qed
\end{proof}

\begin{theorem}
The agent finds the treasure in $\mathcal{O}(D \Delta^{\frac{D}{(2\eta+1)}})$ time, where $\eta=\frac{k}{3}$. 
\end{theorem}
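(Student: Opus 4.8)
The plan is to read off the running time of \textsc{TreasureHuntForGraphWithMarker} as a sum over the successive length-$l$ BFS phases, using Lemma~\ref{Correctness-GeneralGraphmultiMarker} to guarantee that these phases march monotonically forward along the shortest path $P$. By that lemma the agent never returns to an already-visited pebble and, from each \emph{SearchNode}, a length-$l$ search uncovers the next pebble (or marker) strictly closer to $t$, with the final search reaching the treasure. So the total time is $\sum_{\text{phases}} (\text{cost of one length-}l\text{ search})$, and the whole proof reduces to (i) counting the phases and (ii) bounding the cost of a single phase.

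First I would count the phases. Take Case~1, $k=3\eta$, where $l=\frac{D-\eta}{2\eta+1}$, i.e. $D=(2\eta+1)l+\eta$. Reading this identity off $P$ shows that the path decomposes into exactly $2\eta+1$ hops of length $l$ -- the initial hop $s\to$marker$_1$, the $\eta$ within-group hops marker$_i\to$pebble$_i$, the $\eta-1$ between-group hops pebble$_i\to$marker$_{i+1}$, and the last hop pebble$_\eta\to t$ -- together with the $\eta$ single edges internal to the $\eta$ markers. Consequently the agent executes exactly $2\eta+1$ length-$l$ BFS searches, interleaved with only $\mathcal{O}(\eta)$ single-edge moves to cross markers. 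Cases~2 and~3 are handled identically, changing only the value of $l$ and the hop count by an additive constant.

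Next I would bound the cost of one phase. A BFS exploring every walk of length at most $l$ from a fixed node traverses a depth-$l$ exploration tree of branching at most $\Delta$, which has $\mathcal{O}(\Delta^{l})$ nodes for $\Delta\ge 2$; a DFS-style traversal crosses each edge of this tree a constant number of times, so the pure exploration costs $\mathcal{O}(\Delta^{l})$ edge traversals. To this I must add the circle-back verifications of step~\ref{step-8Marker}: whenever a pebble is met the agent walks a stored port sequence of length $1$ (marker case) or $jl+1=\mathcal{O}(D)$ (non-marker case, using the saved-port restriction of step~\ref{step-10Marker}) to decide whether it has merely looped back to the current \emph{SearchNode}. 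Each such verification is a single non-branching walk, and every pebble encounter triggering one is a node of the exploration tree, so the verification work is lower order relative to the exponential exploration term and a single phase costs $\mathcal{O}(\Delta^{l})$ up to this subexponential overhead.

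Finally I would combine the two counts: summing over the $2\eta+1$ phases gives running time $\mathcal{O}\bigl((2\eta+1)\,\Delta^{l}\bigr)$. Since $l=\frac{D-\eta}{2\eta+1}\le \frac{D}{2\eta+1}$ we have $\Delta^{l}\le \Delta^{D/(2\eta+1)}$, and since $\eta=\frac{k}{3}<\frac{D}{6}$ we have $2\eta+1\le D$, so the prefactor folds into the single factor of $D$, yielding the claimed $\mathcal{O}\bigl(D\,\Delta^{D/(2\eta+1)}\bigr)$. The step I expect to be the main obstacle is the per-phase bound: I must argue that the repeated circle-back encounters of the same \emph{SearchNode}, and the $\mathcal{O}(D)$-length verification walks they spawn, cannot accumulate beyond what the $\Delta^{l}$ exploration term (and the final $D$ prefactor) can absorb -- in other words, that no phase secretly costs more than $\Delta^{l}$ by a superconstant polynomial factor. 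Everything else is routine counting once the placement identity $D=(2\eta+1)l+\eta$ is in hand.
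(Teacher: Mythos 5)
Your skeleton---sum over length-$l$ BFS phases, count $\mathcal{O}(\eta)$ phases via the placement identity $D=(2\eta+1)l+\eta$, bound the cost of one phase, and invoke Lemma~\ref{Correctness-GeneralGraphmultiMarker} for forward progress---is the same as the paper's. But the step you yourself flag as the main obstacle is a genuine gap, and your proposed resolution of it is wrong. You claim the circle-back verification work inside a phase is \emph{lower order} than the $\Delta^{l}$ exploration term. It is not: in the worst case essentially every one of the $\Theta(\Delta^{l})$ length-$l$ searches in a phase ends at a pebble (e.g.\ by cycling back to the current marker), and each such encounter forces the agent to walk the stored port sequence. So the verification cost is multiplicative---(number of pebble encounters) times (walk length)---not an additive lower-order term; the fact that each encounter "is a node of the exploration tree" bounds the count of walks, not their total length. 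Worse, you bound each walk by $jl+1=\mathcal{O}(D)$; with that bound a phase costs $\mathcal{O}(D\Delta^{l})$ and your $2\eta+1$ phases total $\mathcal{O}(\eta D\Delta^{l})$, which overshoots the theorem by a factor of $\eta$. Your fallback of folding only the prefactor $2\eta+1\le D$ into $D$ therefore does not close the argument, and the intermediate bound $\mathcal{O}\bigl((2\eta+1)\Delta^{l}\bigr)$ you would need is stronger than the theorem and is not established.

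The paper closes the argument with two observations you are missing. First, by the placement strategy there is at most one single pebble between consecutive markers, so the stored sequence of ports (steps~\ref{step-8Marker} and~\ref{step-9Marker}) has length at most $l+1$, not $\mathcal{O}(D)$; hence a phase costs $\mathcal{O}\bigl((l+1)\Delta^{l}\bigr)$, with the verification walks charged explicitly as a multiplicative $(l+1)$ factor. Second, the factor $D$ in the theorem is not obtained from $2\eta+1\le D$ but from the product of the phase count and the verification length: with $l=\frac{D-\eta}{2\eta+2}$ (the worst of the three placement cases) one has $(\eta+1)(l+1)=\mathcal{O}(D)$, so the total cost is $\mathcal{O}\bigl((\eta+1)(l+1)\Delta^{l}\bigr)=\mathcal{O}\bigl(D\Delta^{\frac{D-\eta}{2\eta+2}}\bigr)=\mathcal{O}\bigl(D\Delta^{\frac{D}{2\eta+1}}\bigr)$. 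In short: do not try to prove the verification work negligible (it is not); instead charge every search a walk of length $l+1$, and let the product $(\text{number of phases})\times(l+1)$ supply exactly the factor $D$ that the statement allows.
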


\begin{proof}
Starting from $s$, the agent performs a BFS and encounters a pebble at distance $l$ for the first time. Further, whenever the agent encounters either a marker (combination of two pebbles) or a single pebble, it searches all possible paths of length $l$. The time taken to search all paths of length $l$ from a node is $\mathcal{O}(\Delta^l)$, where $\Delta$ is the maximum degree of $G$. Whenever a single pebble is encountered, the agent traverses the sequence of port numbers of length $l+1$ (refer step \ref{step-8Marker} of \textsc{TreasureHuntForGraphWithMarker} algorithm). In the worst case, the agent traverses a path of length $l+1$ for each $\mathcal{O}(\Delta^l)$ many searches. So, the time taken is $\mathcal{O}((l+1) \Delta^l)$. The agent has to search more, when the number of single pebble is $\eta+1$, i.e., $l=\frac{D-\eta}{2\eta+2}$ (refer pebble placement strategy). By lemma \ref{Correctness-GeneralGraphmultiMarker}, the agent successfully reaches the treasure. Hence the total time taken to find the treasure is $\mathcal{O}((\eta+1)(l+1) \Delta^l)=\mathcal{O}\left(D\Delta^{\frac{D-\eta}{2\eta+2}}\right)=\mathcal{O}\left(D\Delta^{\frac{D}{2\eta+1}}\right)$, where $\eta=\frac{k}{3}$.
\qed
\end{proof}

In the following part, we provide a close to optimal treasure hunt algorithm for a special class of graphs. We show that in bipartite graphs, treasure can be found without the placement of markers, unlike in a general graph, where the treasure hunt is not possible without markers. Let $G$ be the bipartite graph. Consider the BFS tree of $G$ with root $s$, where a node $u\in G$ which is $i$ distance from $s$ is located at the level $L_i$. As we know the length of the shortest path from $s$ to $t$ is $D$. This implies $t$ is some node at the level $L_D$. Now according to the previous strategy for trees, there are $k$ possible nodes in $G$ where the oracle can place $k$ pebbles, each at least $\frac{D}{k+1}$ {\it dist} apart from each other with respect to the level. The objective of the oracle is to place the pebbles odd {\it dist} apart from each other. Hence we describe the pebble placement strategy in the following manner:
\begin{itemize}
    \item If $\lceil\frac{D}{k+1}\rceil$ (=$l$) is even, then place the $j$-th (where $1\leq j \leq k$) pebble at the node $v_{\alpha+1}$ at level $L_{\alpha+1}$ along the shortest path $P$, where $\alpha={\lceil\frac{jD}{k+1}\rceil}$. The $j$-th pebble is placed $\lceil\frac{D}{k+1}\rceil+1$ {\it dist} apart from $(j-1)$-th pebble.
    \item Otherwise ($\lceil\frac{D}{k+1}\rceil$ is odd) the $j$-th pebble (where $1\leq j \leq k$) is placed at the node $v_{\alpha}$ at level $L_{\alpha}$ along the shortest path $P$. In this case $j$-th pebble is placed $\lceil\frac{D}{k+1}\rceil$ {\it dist} apart from the $(j-1)$-th pebble.
\end{itemize}
Below is a detailed description of the \textsc{TreasureHuntForBipartiteGraph} algorithm to find the treasure by an agent.
\begin{enumerate}

    \item \label{step-1Bipartite}Starting from $s$ the agent sets \textit{SearchNode}=$s$ and performs a breadth first search in lexicographically increasing order of outgoing port numbers until a pebble or treasure is found.
    \item \label{step-2Bipartite} If the treasure is found, the algorithm terminates.
    \item \label{step-3Bipartite} If the treasure is not found and a pebble is found at the node $v_l$ in $L_{l}$-th level, the agent performs two tasks. 
    \begin{enumerate}
    \item At first it stores the length of the path $l$  from {\it SearchNode} to the node containing the pebble. 
    
    \item Then it completes its search at the $L_{l-1}$-th level in a breadth first search technique. Whenever there is another encounter with the pebble, the agent stores the incoming port number of the incoming edge of that node containing the pebble.
    \end{enumerate}
    \item   Set \textit{SearchNode}=$v_l$.
    \item \label{saved} From {\it SearchNode}, the agent again performs a breadth first search of length $l$, ignoring the saved incoming port numbers, until the treasure or pebble is encountered (As per the pebble placement strategy, the distance between each pebble is $l$).
    \begin{enumerate}
        \item If the treasure is found, go to step \ref{step-2Bipartite}.
        \item If the treasure is not found and a pebble is found at the node $v_{2l}$ at level $L_{2l}$, then go to step \ref{step-3Bipartite}.
    \end{enumerate}
 
\end{enumerate}

%\begin{figure}[h]
%\centering
%\includegraphics[width=.45\textwidth]{}
%\caption{First pebble found at $v_3$, correspondingly the agent stores ( 3, $\rho_{i},\rho_{j},\rho_{k}$), where 3 is the length of the path and $\rho_{i}$'s are the port number of the incoming port}
%\label{Fig-Bipartite1}
%\end{figure}

Now in the next two lemmas, we first prove the correctness of our algorithm and further provide the complexity analysis.
\begin{lemma} \label{Correctness-Bipartite}
Given $k<\frac{D}{2}$ pebbles, the agent following \textsc{TreasureHuntForBipartiteGraph} algorithm successfully finds the treasure in a bipartite graph.
\end{lemma}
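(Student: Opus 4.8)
The plan is to follow the same four-part skeleton used in the proof of Lemma~\ref{Correctness-GeneralGraphmultiMarker}, but to show that in a bipartite graph the \emph{odd} spacing of the pebbles, together with the saved incoming ports, does the work that markers did for general graphs. I would use two structural facts about the BFS layering of a bipartite graph rooted at $s$: (i) every edge joins nodes in consecutive levels $L_i$ and $L_{i+1}$ (there is no edge inside a level, since two nodes of the same level lie in the same color class), and (ii) there is no closed walk of odd length.

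First I would rule out the agent circling back to the current \textit{SearchNode}. Let the \textit{SearchNode} be a pebble node $v$ at level $L_i$, and recall that by the placement strategy the BFS depth equals the odd distance between consecutive pebbles, which I call $l$. A length-$l$ walk from $v$ to itself would be a closed walk of odd length, which fact~(ii) forbids; hence the agent never returns to $v$ during the depth-$l$ search, so every pebble it meets is a genuinely distinct node. This is precisely the guarantee markers supplied in the general case, now obtained for free from bipartiteness.

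Next — and this is the step I expect to be the crux — I would show that the pebble found at depth $l$ is always the next one along $P$, i.e. at a strictly deeper level. When the agent sits at $v \in L_i$, step~\ref{step-3Bipartite} has saved the incoming ports of all edges entering $v$ from level $L_{i-1}$, and step~\ref{saved} forbids their use. By fact~(i) the neighbors of $v$ lie only in $L_{i-1}$ or $L_{i+1}$; since the $L_{i-1}$-edges are all blocked, the first move is forced into $L_{i+1}$. Each of the remaining $l-1$ moves changes the level by exactly $\pm 1$, so the level never drops below $(i+1)-(l-1)=i-l+2$. As the previous pebble sits at level $i-l$, the search cannot reach it, nor any shallower pebble; combined with the no-return argument this shows no already-visited pebble is reachable. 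On the other hand, the shortest $s$-to-$t$ path yields a length-$l$, monotonically increasing route from $v$ to the next pebble $v'$ at level $L_{i+l}$, while $\mathrm{dist}(v,v')\ge l$ by the level difference, so $\mathrm{dist}(v,v')=l$ exactly; thus $v'$ is encountered at depth precisely $l$ and is the unique new pebble within that radius.

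Finally I would close the induction: starting at $s$ the agent reaches the first pebble, and the two arguments above let it advance unambiguously from the $m$-th pebble to the $(m{+}1)$-th; since the oracle places the last pebble within distance $l$ of $t$, the final depth-$l$ search from it reaches the treasure. The only delicate bookkeeping is in the crux step, where I must verify that both the even and odd cases of the placement rule produce an \emph{odd} BFS depth equal to the inter-pebble distance, so that fact~(ii) applies and the estimate $i-l+2 > i-l$ genuinely separates the next pebble from all earlier ones.
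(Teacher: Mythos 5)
Your skeleton matches the paper's proof (four guarantees: no return to the \textit{SearchNode}, no backtracking to earlier pebbles, a pebble found within depth $l$, and the treasure from the last pebble), and your first and last parts are sound --- phrasing the self-return argument via closed walks of odd length is, if anything, more careful than the paper's appeal to odd cycles. The genuine gap is exactly in the step you call the crux. You assert that step~\ref{step-3Bipartite} saves the incoming ports of \emph{all} edges joining the new pebble node $v \in L_i$ to the previous level $L_{i-1}$, so that the first move of the next depth-$l$ search is forced into $L_{i+1}$ and the level can never drop below $i-l+2$. The algorithm cannot guarantee this. The agent can only save ports of edges it actually traverses while completing its breadth-first search of depth $l-1$ from the current \textit{SearchNode} $u$; these are precisely the edges $(w,v)$ with $dist(u,w)=l-1$. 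But a neighbor $w$ of $v$ lying in $L_{i-1}$ need not be at distance $l-1$ from $u$: global levels are distances from $s$, not from $u$, and $w$ may owe its level to a shortest path from $s$ that avoids the vicinity of $u$ entirely, giving $dist(u,w)=l+1$ (for instance when every short route from $u$ to $w$ passes through $v$ itself). Such an edge is never traversed during the depth-$l$ search from $u$ --- that search stops upon reaching $v$ and never probes $v$'s outgoing ports --- so its port is not saved, and the agent's first move from $v$ may legitimately go \emph{up} into $L_{i-1}$. Note that this cannot happen in the first iteration, where $u=s$ and BFS depth coincides with global level, which makes the error easy to overlook; it is from the second \textit{SearchNode} onward, exactly where your induction needs the claim, that it fails. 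Once the first move can enter $L_{i-1}$, a length-$l$ walk can descend to level $i-l$, and your separation $i-l+2 > i-l$ no longer isolates the previous pebble.

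The lemma survives, but the exclusion of earlier pebbles must be argued by distances from $u$ rather than by level counting --- this is what the paper's (terse) second bullet does. By bipartite parity, for any neighbor $w$ of $v$ in $L_{i-1}$, $dist(u,w)$ is even and at least $l-1$; hence every \emph{unsaved} edge at $v$ leads to a node at distance at least $l+1$ from $u$ (edges into $L_{i+1}$ trivially do as well). A walk of length at most $l$ starting at $v$ with an unsaved first edge is, after one step, at distance at least $l+1$ from $u$ with only $l-1$ steps remaining, so it can never reach $u$; pebbles preceding $u$ on $P$ lie at distance at least $2l$ from $v$ and are out of range; and $v$ itself is excluded by your parity argument. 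Combined with your (correct) third and fourth points, this closes the induction. So the repair is local, but as written the crux step is wrong, and it is the one place where the saved ports and bipartiteness must genuinely interact.
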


\begin{proof}
To prove the correctness of our algorithm, we need to ensure all the facts stated in lemma \ref{Correctness-GeneralGraphmultiMarker}, are resolved in the case of a bipartite graph. We deal with these issues in the following manner: 
\begin{itemize}
    \item The pebbles are placed odd distance (=$l$) apart. We know that bipartite graphs cannot have odd cycles (see proof in \cite{Lecture}). So, the agent searching from {\it SearchNode} a path of length $l$ cannot return to itself.
    
    \item Let, at some iteration of the algorithm, current {\it SearchNode}=$v_i$ at level $L_i$ of the BFS tree. Then the pebble encountered while searching all possible paths of length $l$ should be at level $L_j$, where $j>i$. This is because, the length of the path from the last {\it SearchNode} to $v_i$ is at least $l+1$. The reason being, the agent does not travel through the edges adjacent to $v_i$ whose ports are saved (refer step \ref{saved} of \textsc{TreasureHuntForBipartiteGraph} algorithm). This implies that the new {\it SearchNode} is always at the $L_j$-th level, where $j>i$. So, this means that the agent cannot go back to the already visited pebbles.
    
    \item The oracle places the pebbles $l$ distance apart. So, the agent while searching from {\it SearchNode}, all possible paths of length $l$, must encounter one pebble.
    
    \item The node containing the last pebble is chosen in such a manner by the oracle, such that the {\it dist} between treasure and that node is at most $l$. So, the agent while searching from {\it SearchNode} all paths of length $l$ must find the treasure.
\end{itemize}
All the above points guarantee that the agent successfully finds the treasure.
\qed
\end{proof}
\begin{theorem}
Given $k<\frac{D}{2}$, the agent finds the treasure in $\mathcal{O}(k\Delta^{\frac{D}{k}})$ time in a bipartite graph.
\end{theorem}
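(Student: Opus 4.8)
The plan is to invoke Lemma~\ref{Correctness-Bipartite} for correctness and thereby reduce the theorem to counting edge traversals. The key structural observation is that the agent's run decomposes into a sequence of independent \emph{search phases}: the initial BFS rooted at $s$, followed by one BFS rooted at each newly discovered pebble, the last of which uncovers the treasure. Because Lemma~\ref{Correctness-Bipartite} already guarantees that the agent never circles back to the current \emph{SearchNode} (pebbles sit an odd distance $l$ apart and a bipartite graph has no odd cycle) and never returns to a previous \emph{SearchNode} (the saved incoming ports are skipped, so any discovered pebble lies at a strictly larger level), every phase terminates by finding a genuinely new pebble, or the treasure. Hence no phase is wasted or repeated, and the number of phases is exactly one more than the number of pebbles.

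Next I would bound the cost of a single phase. A breadth-first search exploring all port-labeled walks of length at most $l$ from a root in a graph of maximum degree $\Delta$ traverses $\mathcal{O}(\Delta^{l})$ edges, since the number of such walks is at most $\sum_{i=0}^{l}\Delta^{i}=\mathcal{O}(\Delta^{l})$. The auxiliary work inside a phase — completing the BFS up to level $L_{l-1}$ to record the incoming ports of pebble-bearing nodes (step~\ref{step-3Bipartite}) and then skipping those saved ports on the next descent (step~\ref{saved}) — only restricts the set of walks explored, so it is subsumed in the same $\mathcal{O}(\Delta^{l})$ bound. This is precisely where the bipartite structure pays off: unlike the general-graph algorithm, no markers are needed, so the agent never has to re-traverse a stored port sequence of length $l+1$ on each of the $\Delta^{l}$ candidate walks, and the extra $(l+1)$ factor of the general-graph theorem does not appear here.

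Finally I would assemble the count. The oracle places the $k$ pebbles at odd intervals of length $\Theta(l)$ along $P$, where $l=\lceil\frac{D}{k+1}\rceil$, with the last pebble within distance $l$ of $t$; thus $P$ is covered by $k+1$ consecutive intervals and the agent runs $k+1$ phases. Multiplying the per-phase cost by the number of phases gives a total running time of $\mathcal{O}\big((k+1)\Delta^{l}\big)=\mathcal{O}(k\Delta^{l})$. It then remains only to absorb the exponent into the stated form, i.e. to pass from $\Delta^{l}$ with $l=\lceil\frac{D}{k+1}\rceil$ (or $l+1$ in the even case) to $\Delta^{\frac{D}{k}}$. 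Since the search depth sits within an additive constant of $\frac{D}{k+1}\le\frac{D}{k}$, I expect this reconciliation of the ceiling and the even-case $+1$ against the clean exponent $\frac{D}{k}$ to be the one genuinely delicate bookkeeping step, and therefore the main obstacle — all the substantive content (no wasted phases, and an $\mathcal{O}(\Delta^{l})$ bound per phase) having already been discharged by Lemma~\ref{Correctness-Bipartite} and the single-phase walk count.
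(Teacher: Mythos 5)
Your proposal is correct and follows essentially the same route as the paper's proof: correctness is delegated to Lemma~\ref{Correctness-Bipartite}, each of the $\mathcal{O}(k)$ search phases is charged $\mathcal{O}(\Delta^{l})$ edge traversals for a depth-$l$ BFS, and the product $\mathcal{O}(k\Delta^{D/(k+1)})$ is then weakened to $\mathcal{O}(k\Delta^{D/k})$. The exponent reconciliation you single out as the main remaining obstacle is handled no more delicately in the paper itself — it silently absorbs the ceiling and the even-case $+1$ into the asymptotics when declaring the per-phase cost to be $\mathcal{O}(\Delta^{D/(k+1)})$, relying on the slack gained by relaxing the exponent from $\frac{D}{k+1}$ to $\frac{D}{k}$.
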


\begin{proof}
The agent starting from $s$ searches all possible paths of length $l$ (refer step \ref{step-1Bipartite} of \textsc{TreasureHuntForBipartiteGraph} algorithm). Now as the oracle places the pebbles $\lceil \frac{D}{k+1}\rceil$ (or $\lceil \frac{D}{k+1}\rceil+1$) distance apart (refer the pebble placement strategy), this implies the length of the path $l$ is either $\lceil \frac{D}{k+1}\rceil$ or $(\lceil \frac{D}{k+1}\rceil+1)$. So, the time required to search all possible paths of length $l$ in worst case is asymptotically $\mathcal{O}(\Delta^{\frac{D}{k+1}})$. Now from each pebble encountered, the agent searches all possible paths of length $l$ and then reaches the next pebble along the path $P$. As the agent successfully reaches the treasure by lemma \ref{Correctness-Bipartite}. Hence, this process goes on until the treasure is encountered (refer step \ref{saved} of \textsc{TreasureHuntForBipartiteGraph} algorithm). Now each such search of paths of length $l$ from $k$ many pebbles takes $\mathcal{O}(\Delta^{\frac{D}{k+1}})$ time. So, the total taken to find treasure in a bipartite graph is $\mathcal{O}(k\Delta^{\frac{D}{k+1}})=\mathcal{O}(k\Delta^{\frac{D}{k}})$.
\qed
\end{proof}

\section{Treasure Hunt Algorithm when $k\geq D$}\label{k>D}

In this section, we explore the case when $k=cD$ pebbles are provided by the oracle, where $c$ is any positive integer. We propose an algorithm that finds the treasure in $\mathcal{O}\left[cD{(\frac{\Delta}{2^{{c}/{2}}})}^2 + cD\right]$ time.

Let $G$ be a graph with maximum degree $\Delta \geq 10 (c+1)+6$. \footnote{The reason behind the limit of $\Delta$ is explained in \textit{Remark} \ref{note}}. Let  $\beta= 10 (c+1)+6$. The case where $\Delta < \beta$ is dealt with a different strategy, and it is explained ahead. The path $P$ from $s$ to $t$ may have two scenarios: \\
\textit{Scenario-1:} $P$ may not contain any node of degree $\beta$ which is similar to solving the case in which $G$ has maximum degree $\Delta < \beta$. \\
\textit{Scenario-2: } $P$ contains at least one node of degree $\beta$. 

All these cases are dealt separately and are discussed ahead. In similar structure, before proceeding to general graphs we first provide a description of our algorithm and pebble placement strategy in trees and then further extend our idea for general graphs keeping in mind the additional difficulties faced in case of general graph.

\subsection{Idea of Treasure Hunt in Tree for $k=cD$ Pebbles}\label{tree}Let $G$ be a rooted tree, where initial node $s$ is the root of the tree. The nodes at the level $L_i$ are located at a distance $i$ from $s$. The treasure $t$ is located at a distance $D$ from $s$ at the level $L_D$. Let $P=v_0,v_1,\cdots ,v_{D}$ (where $v_0$=$s$ and $v_{D}=t$) be the shortest path from $s$ to $t$.

% {\color{red}{Explanation of the fact that when $k\ge D$ pebbles then placing a pebble at $s$ represents that the agent searches first half of its neighbors and no pebble represents searching the next half. Explain it somewhere}}

If $k=D$ pebbles are given, then the oracle places a pebble on each $D$ many nodes along $P$, i.e., one pebble is placed on each $v_i$, where $0\le i \le D-1$ and $v_i\in P$. Now if more than $D$ pebbles are provided, i.e., $k>D$. Then along with placing $D$ pebbles on each node $v_i$, the oracle further places the remaining pebbles along the children of $v_i$'s. These remaining pebbles help the agent to reduce its search domain to find the next node $v_{i+1}$ along $P$  (where $0\le i \le D-1$). The agent from $v_i$, obtains a binary string by visiting the neighbors of $v_i$ along which some of the remaining $k-D$ pebbles are placed. This binary string gives the knowledge to the agent, of the collection of outgoing ports along which the agent must search in order to encounter the pebble placed at the node $v_{i+1}$. In the following part, we discuss how a string is represented with respect to the pebbles placed.

 \noindent\textbf{String Representation with Pebbles:} Among the neighbors of $v_i$ which are used for encoding a string: if the node contains a pebble it is termed as `1' in the $j$-th bit of the binary string, whereas no pebble represents `0'. Now there may be a scenario, where all the neighbors are not used for encoding. In order to learn where the encoding has ended, the following strategy is used \cite{gorain2021pebble}. Instead of a simple binary representation we provide a \textit{transformed }binary representation in which we replace `1' by `11' and `0' by `10'. This transformation ensures there is no `00' substring in the transformed binary string. As an example $\gamma=0010$ will be transformed to $\gamma^t = 10101110$. Hence when the agent finds two consecutive `0's it learns that the encoding has ended. The whole process is explained with the help of the following example.

\noindent\textbf{Example:} Given $k=2D$ pebbles, the example as shown in Fig. \ref{Fig-Tree2Dpebble} explains the execution the algorithm when the agent reaches the node $v_i\in P$ along the incoming port $\rho_0$. Let $deg(v_i)=12$ and the node $v_{i+1}$ connected to $v_i$ with the edge having outgoing port number $\rho_i$, where $\rho_i\in \{\rho_1,\cdots,\rho_{\frac{deg(v_i)}{2}-1}\}$ (i.e., $\rho_4$ to be exact). Then the pebbles for encoding are placed along the nodes corresponding to the outgoing ports  $\{\rho_{\frac{deg(v_i)}{2}},\cdots,\rho_{deg(v_i)-1}\}$. The $j$-th bit of the binary string is `1' if the node corresponding to the outgoing port number $\rho_{(\frac{deg(v_i)}{2}+(j-1))}$ contains a pebble, otherwise, if no pebble is found then the $j$-th bit is `0'. So, the agent currently at $v_i$ obtains the transformed binary string $\gamma^t=11$ (as the `00' obtained stops the agent from further search) by searching the nodes corresponding to the outgoing ports  $\{\rho_{\frac{deg(v_i)}{2}},\cdots,\rho_{deg(v_i)-1}\}$. Hence the binary string is $\gamma=1$. Now as the length of $\gamma$ obtained is 1, it divides the neighbor set of $\frac{deg(v_i)}{2}$ in to $2^{|\gamma|}$ (where $|\gamma|=1$) partitions each of size at most $\lceil\frac{deg(v_i)}{2^{1+1}}\rceil = \lceil \frac{12}{4} \rceil =3$. Further, it searches the outgoing ports corresponding to the 2nd partition (as $0$ represents the 1st partition, whereas $1$ represents the second partition of $\frac{deg(v_i)}{2}$ neighbors of $v_i$) out of $2^1$ partitions each consisting exactly 3 ports. This means the agent searches only the nodes corresponding to the outgoing ports $\rho_4$, $\rho_5$ and $\rho_6$ and finds the node $v_{i+1}$ containing a pebble via the outgoing port $\rho_4$.

\begin{figure}[h]
\centering
\includegraphics[width=0.8\textwidth]{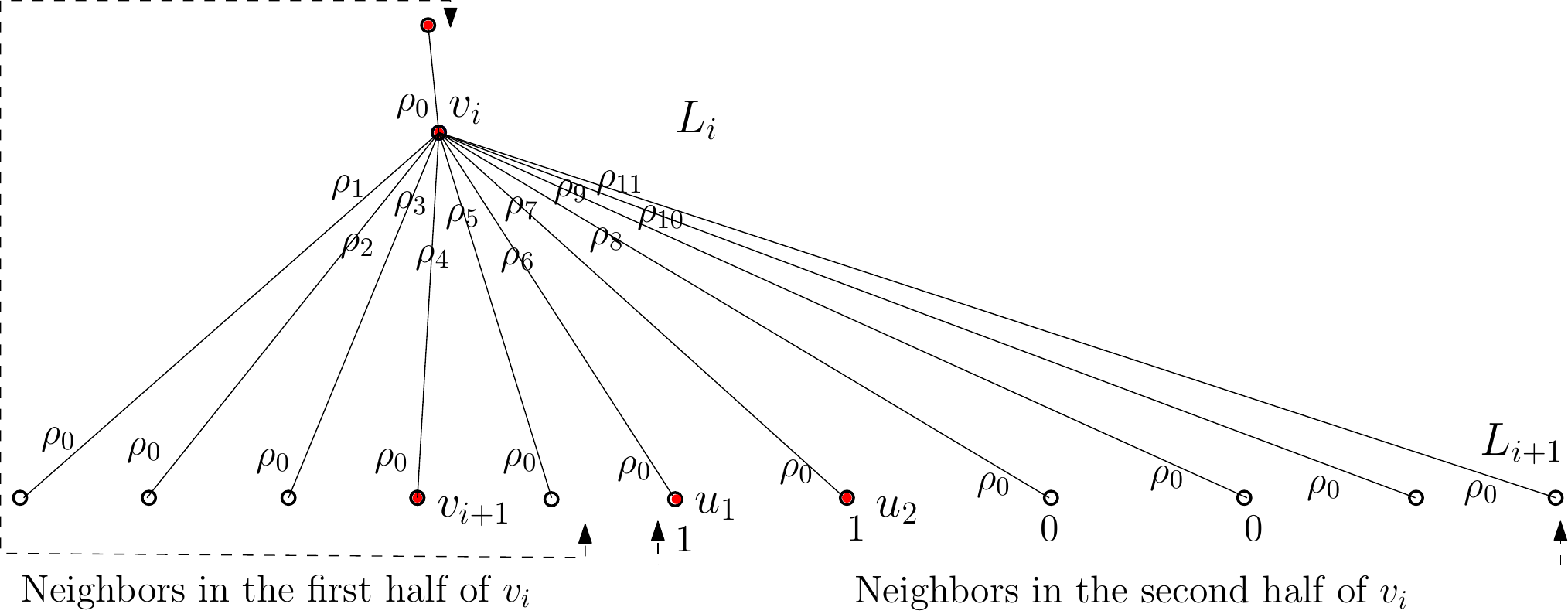}
\caption{Represents the encoding in order to help the agent reach $v_{i+1}$ from $v_i$. Pebbles are placed at the nodes $u_1$ and $u_2$ of the tree, to represent the {\it transformed} binary string $11$, which the agent obtains to search the nodes along a collection of outgoing port numbers for a pebble. This string localises the search along the nodes corresponding to the outgoing ports $\rho_3,\rho_4$ and $\rho_5$. Finally obtains a pebble at the node $v_{i+1}$ corresponding to the outgoing port $\rho_4$.}
\label{Fig-Tree2Dpebble}
\end{figure}

%The pebbles are placed at the children of the nodes in $P$ which help the agent to confine the search domain in the neighbors of the nodes along $P$. The pebbles represent the range of neighbors to search in a form of a binary string. More precisely let $v$ be the node of degree $deg(v)$. Now consider if there is a $j$ length binary string encoded on the neighbors of $v$ then this represents the binary representation of the value $\alpha$ such that the agent will the search that $\frac{\alpha\cdot deg(v)}{2^j}$ half of its neighbors for a pebble or the treasure ,i.e., the search for the next node in the path $P$ is localised due to this encoding. The agent visits $\frac{deg(v)}{2}$ many neighbors of $v$ in increasing order of port numbers except the port that connects the parent node. The agent travels the $j+\gamma$-th port of $v$ (where $\gamma$ is port number of either 1st neighbor or $\frac{deg(v)}{2}$-th neighbor of $v$) and identifies the $j$-th bit of the binary representation as '1` if the node contains a pebble otherwise it identifies it as '0`. So, these placement of pebbles in turn helps us to decode the range of the neighbors of $v$ to search for the next node along the path $P$. 

Now, this idea is simple for trees, but this exact idea will not work for any arbitrary graph. So, we make necessary modifications and explain them in the following section.

\subsection{Extending the idea for General Graphs} The above idea for trees cannot be directly extended to general graphs. It is because, any tree can be transformed into a rooted tree, the root being $s$. In which, the edges go from level $L_i$ to $L_{i+1}$ (where $i\ge 0$), creating an acyclic structure. The reason being, in a tree there is a unique path between two nodes, i.e., no two nodes have common children. Similarly, we can create a BFS tree of any arbitrary graph rooted at $s$. But any arbitrary graph may contain cycles. So, there may be edges going in between levels in the BFS tree as well. Now recalling the pebble placement idea for trees. The encoding in the neighbors of a node $v$ does not affect the encoding in the neighbors of node $u$ as there are no common children. But this is not true for general graphs. The encoding done for the node $u$ can hamper the encoding for the node $v$. To resolve this issue, we place the pebbles for encoding on high degree nodes that are not `close'. We call these high degree nodes as fat nodes which are defined below. A node is fat if its degree is at least $\beta$, where $\beta=10(c+1)+6$. Otherwise, the node is termed as \textit{light}. 

 Now we have the following cases and we deal with them separately:
\begin{itemize}
    \item \textit{Case-1:} Every node $v_i\in P$, $0\le i \le D-1$, is light.
    \item \textit{Case-2:} There exists at least one node in $P$ which is fat.
\end{itemize}
 \noindent\textit{Case-1:} In this case no encoding is needed. The oracle places a single pebble at each level of the BFS tree along the path $P$. So, the agent starting from $s$, sets \textit{SearchNode}$=s$. If a pebble is found at $s$ then it searches the neighbors having a pebble along the outgoing port $\{\rho_0,\cdots,\rho_{\frac{deg(s)}{2}}\}$.  Otherwise if no pebble is found at $s$ then it searches the neighbors having a pebble along the outgoing ports $\{\rho_{\frac{deg(s)}{2}+1},\cdots,\rho_{deg(s)-1}\}$. Whenever the next pebble is found at a node $v_1$, it sets the \textit{SearchNode}$=v_1$. At each subsequent steps the agent visits all the neighbors of the current \textit{SearchNode} for a pebble, except the incoming port which connects the current \textit{SearchNode} to the previous \textit{SearchNode} (i.e., except the port $\rho_0$ for the node $v_i$ in Fig. \ref{Fig-Tree2Dpebble}). This process will continue until the treasure is found. Now, since all the nodes along $P$ are light, hence their degree is less than $\beta$. So, the time needed to find the treasure is at most $\beta D$, where $\beta=10(c+1)+6$.
 
 \noindent\textit{Case-2:}  In this case encoding is needed as all the nodes along the path $P$ are not light. The encoding is done on the children of a set of nodes termed as {\it milestone}. The presence of each {\it milestone} helps the agent to localise the search domain for the next few nodes from along the path $P$. To define the first {\it milestone} node, we have the following cases in the BFS tree corresponding to $G$:
 \begin{itemize}
     \item \textit{Case-A:} The node $s$ is fat. This implies the first {\it milestone} is $s$.
     \item \textit{Case-B:} The node $s$ is light but the node $v_1$ at level $L_1$ is fat. This implies the first {\it milestone} is $v_1$.
     \item \textit{Case-C:} The nodes $s,v_1,\cdots,v_j$ (where $j\ge 2$) are light whereas the node $v_{j+1}$ at level $L_{j+1}$ is fat. This implies the first {\it milestone} is $v_{j+1}$.
 \end{itemize}
The subsequent milestones are defined recursively as follows. For $i\geq 1$, let the $i$-th milestone is in level $L_j$ (where $j \ge 0$). Then the $(i+1)$-th milestone node should be at level $L_k$, where $k-j\geq 5$, i.e., the distance between any two milestone is at least 5. This distance is maintained to avoid having a common neighbor between any two pair of milestones. Refer to the remark \ref{note} for detailed explanation. Since the agent has no knowledge about the underlying topology hence it cannot distinguish between light or milestone nodes. The placement of pebbles for encoding not only gives the binary representation but also determines whether a node is a milestone node or a light node (refer \textsc{CheckerForMilestone} algorithm). The pebble placement strategy is discussed in the following section.

\subsubsection{Pebble Placement:}\label{pebbleplacement}

There are two reasons for pebble placement. One is for giving the direction to the treasure along $P$. The other is for encoding which reduces the search domain for the next node along $P$. Pebbles are placed at every node along the path $P$, except at a node which is $2$ {\it dist} apart from a milestone node. More precisely, if $v_i$ is a milestone node at level $L_i$ along $P$, then no pebble is placed at the node $v_{i+2}$ at level $L_{i+2}$ along $P$ (refer the node $v_2$ in Fig. \ref{Fig-StateDiagram}, where $s$ is the first milestone). The goal of the agent is to find the next node of the path $P$ where the pebble is located. In the worst case, the agent may have to search all the neighbors. To reduce this search domain, encoding is incorporated. So, encoding will be done only at the neighbors of the fat node. Now the question is, which neighbors of the milestone are used for encoding. As the oracle knows which neighbor of the milestone along $P$ the pebble is placed, accordingly it uses the other half of the neighbors to place the pebbles for encoding. As shown in the Fig. \ref{Fig-StateDiagram}, where $s$ is a milestone and $v_1$ is the next node along $P$ and the pebbles for encoding are placed along the other half of neighbors of $s$.

\noindent\textit{Strategy for Encoding:} The number of available neighbors for encoding is $\frac{deg(v)}{2}$, where considering $v$ is a milestone. Out of $cD$ pebbles, $D$ number of pebbles are placed along the path $P$. The remaining $(c-1)D$ pebbles are used for encoding. The length of each encoding should be at most $c-1$. To distinguish between two separate encodings, the oracle leaves two consecutive neighbors without pebble. To understand the termination of the encoding another three consecutive neighbors are kept empty. To encode $\alpha$ many binary string, we need at least $\alpha((c-1)+2)+3$ neighbors. So, the relation between $deg(v)$ and $\alpha$ is $\frac{deg(v)}{2}\ge \alpha(c+1)+3$. Further, we define the set $\mathcal{R}$ as the set of two outgoing port numbers and all the incoming port numbers of the neighbors of the milestone along which pebbles are placed  corresponding to the nodes containing pebbles from $s$. The cardinality of $\mathcal{R}$ is at most $\alpha(c-1)+3$. The reason being, from a milestone $\alpha$ binary strings are encoded, each having length at most $(c-1)$ and a single pebble is placed along the desired path. As shown in Fig. \ref{Fig-Tree2Dpebble} the set $\mathcal{R}=\{\rho_4, \rho_6,\rho_0,\rho_0,\rho_0\}$, corresponding to the nodes $v_{i+1},u_1$ and $u_2$, respectively.

 \begin{figure}[h]
 \centering
 \includegraphics[width=0.8\textwidth]{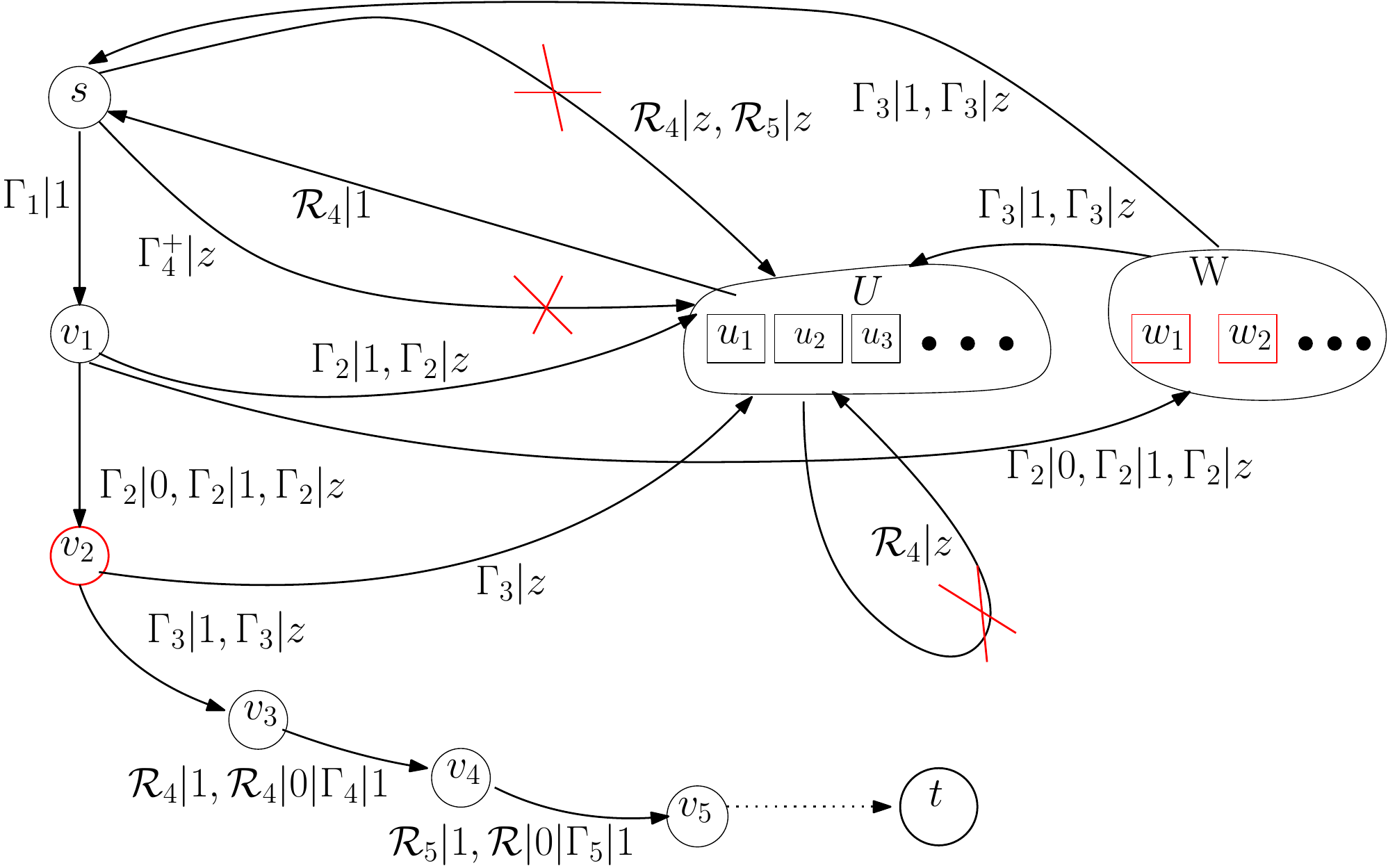}
 \caption{Directed graph representing the possible path traversal of agent from a Fat Node}
 \label{Fig-StateDiagram}
 \end{figure}

\begin{figure}[h]
 \centering
 \includegraphics[width=0.9\textwidth]{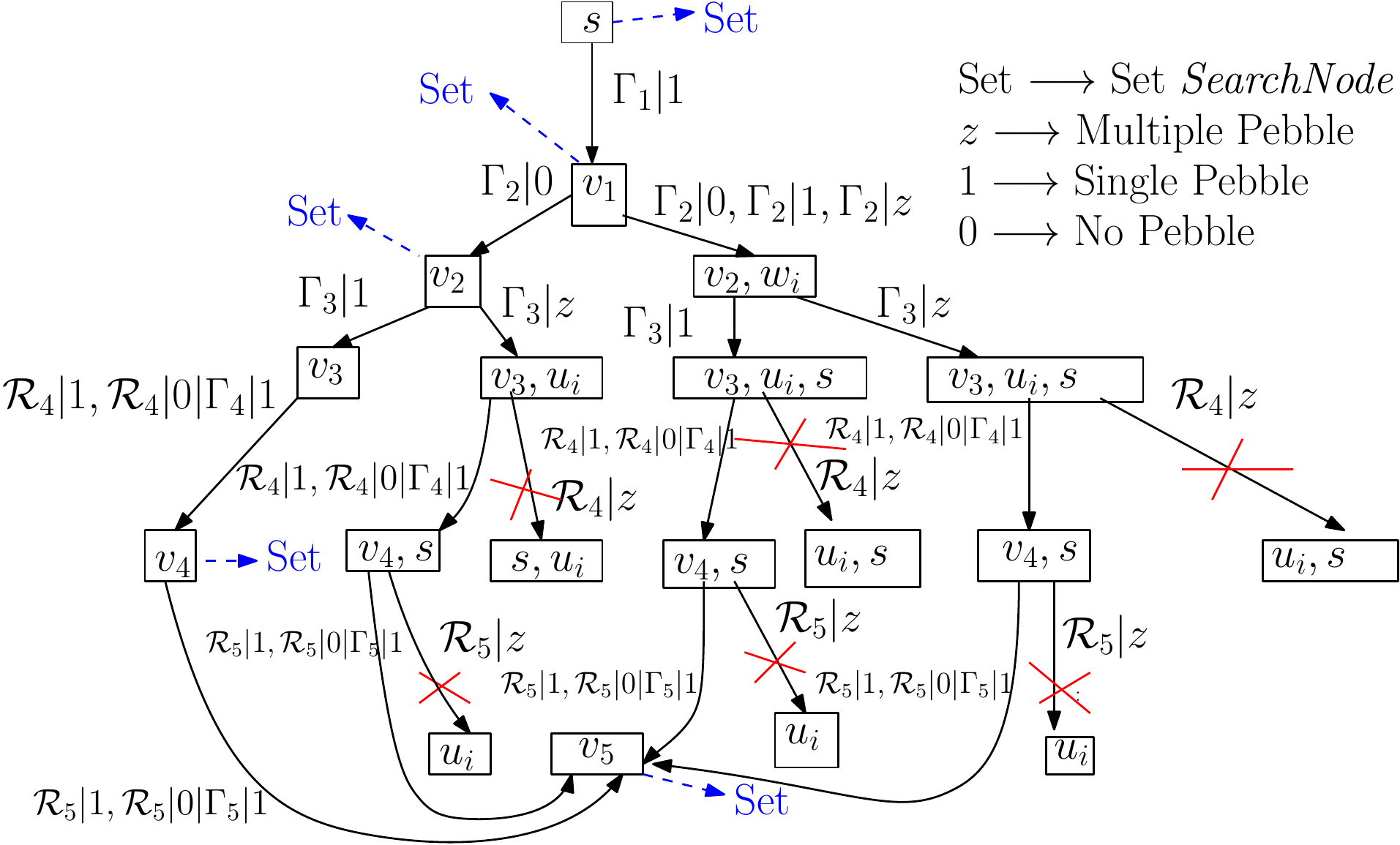}
 \caption{Flow Chart representing all possible traversals from a fat node $s$}
 \label{Fig-FlowChart}
 \end{figure}

Below is a detailed description of the \textsc{CheckerForMilestone} algorithm, for the agent to determine whether a node is a milestone or a marker.
\begin{enumerate}
    \item The agent currently at a node $v$, starts checking the node along the outgoing port $\rho_1$.
    \item If a pebble is found, then it searches the next four consecutive neighbors, i.e., the nodes with outgoing ports $\rho_2,\rho_3,\rho_4$ and $\rho_5$, respectively.
    \begin{enumerate}
        \item  If more than one pebble is found, then the agent concludes $v$ is a milestone and the encoding is done along the first half of its neighbors, except the parent. It further concludes that the next node along $P$ is present in the other half of its neighbors.
        \item Otherwise, if a single pebble is found, corresponding to the nodes with outgoing ports $\rho_1,\rho_2,\rho_3,\rho_4$ and $\rho_5$, then the node having pebble is the next node along $P$. It further checks the node with outgoing port $\rho_{\frac{deg(v)}{2}+1}$, if a pebble is found, then conclude that $v$ is a milestone, otherwise $v$ is light.
    \end{enumerate}
    \item Moreover if no pebble is found at the node with outgoing port $\rho_1$, then the agent checks the nodes with outgoing port  $\rho_{\frac{deg(v)}{2}},\rho_{\frac{deg(v)}{2}+1},\rho_{\frac{deg(v)}{2}+2}, \rho_{\frac{deg(v)}{2}+3}$ and $\rho_{\frac{deg(v)}{2}+4}$, respectively.
     \begin{enumerate}
         \item If more than one pebble is found, then the agent concludes $v$ is a milestone and the encoding is done along the second half of its neighbors, except the parent. It further concludes that the next node along $P$ is present in the other half of its neighbors.
         \item Otherwise, if a single pebble is found, corresponding to the nodes with outgoing ports $\rho_{\frac{deg(v)}{2}},\rho_{\frac{deg(v)}{2}+1},\rho_{\frac{deg(v)}{2}+2}, \rho_{\frac{deg(v)}{2}+3}$ and $\rho_{\frac{deg(v)}{2}+4}$, then the node having pebble is the next node along $P$. Moreover, conclude that the node $v$ is light.
     \end{enumerate}
    
\end{enumerate}

As shown in Fig. \ref{Fig-StateDiagram}, we create a directed graph representation consisting of all the possible paths that the agent can travel from a fat node $s$ towards the treasure $t$. The pebbles for encoding are placed along the neighbors of $s$. The set $U$ is a collection of nodes $\{u_1,\cdots \}$ which represent the nodes where pebbles are placed for encoding. The set $W$ is the collection of nodes $\{w_1,w_2,\cdots \}$, which are at the same level as $v_1$ and no pebbles are placed on them. The nodes along the desired path from $s$ to $t$ are depicted by circles, in which pebbles are placed at every node, except at $v_2$ (marked by a red circle) and $t$. Let $\Gamma_i$ be the integer value of the binary string $\gamma_i$ (where $1\le i \le \alpha$) encoded along the neighbors of $s$. The edge $(u,v)$ denoted by $\Gamma|n$ implies that the agent after searching some $\Gamma$ partition of $u$'s neighbor, encountered $n$ many pebbles. The edge $(u,v)$ denoted by $\mathcal{R}|n$ implies that the agent after searching the nodes corresponding to the set $\mathcal{R}$ of $u$'s neighbor, encountered $n$ many pebbles.The notation $\mathcal{R}|0|\Gamma$ along an edge $(u,v)$, represents the fact that the agent after searching the nodes corresponding to the set $\mathcal{R}$ of $u$'s neighbor and encounters no pebble, further it searches its $\Gamma$ partition of neighbors and encounters $n$ many pebbles. The red cross on an edge denotes that it is a path in which the agent detects inconsistency and stops further exploration along this path. The variable $z$ in Fig. \ref{Fig-StateDiagram}, represents any integer value greater than 1. 

Moreover, the Fig. \ref{Fig-FlowChart} is a generalized version of Fig. \ref{Fig-StateDiagram}. In this figure, all possible locations where the agent can travel by using the strings $\gamma_1,\cdots,\gamma_5$ are explained.

Further, any search from a node defined in \textsc{TreasureHuntForGraph} algorithm, is performed keeping in mind the fact that the agent does not search along its parent port. As shown in Fig. \ref{Fig-Tree2Dpebble} by the $\rho_0$ from $v_i$.

Below is a detailed description of the \textsc{TreasureHuntForGraph} algorithm for the agent to find the treasure.
%In this part, we give the detailed description of the algorithm which the agent executes in order to find the treasure.
\begin{enumerate}
    \item \label{step-1cD} The agent starting from $s$, sets {\it SearchNode}=$s$ checks for a pebble at $s$. If no pebble is found at $s$, then it searches the first half of its neighbors, for a node with a pebble. If a pebble is found at $s$, then it performs the \textsc{CheckerForMilestone} algorithm to check, whether the node is light or a milestone.
    \item \label{light} If $s$ is light, then it searches the second half of its neighbors until a treasure or a pebble is encountered. If the treasure is found then the algorithm terminates. Otherwise if a pebble is found at a node $v_1$, then set {\it SearchNode}=$v_1$.
    
    \item \label{step-3cD} If $s$ is a milestone, then it decodes the $\alpha$ many binary strings by visiting the second half of its neighbors (node $u_i$'s in Fig. \ref{Fig-Tree2Dpebble}). It stores the set $\mathcal{R}$ and performs the following task. 
    \begin{enumerate}
        \item \label{step-3acD}The agent first obtains the binary strings $\gamma_1,\cdots,\gamma_{\alpha}$ from the {\it transformed} binary strings. The length of each $\gamma_j$ ($1\le j \le \alpha$) is at most $\frac{c-1}{2}$, as length of the transformed binary string is at most $c-1$ (ref. to pebble placement of section \ref{pebbleplacement}).
        \item \label{step-3bcD}Further from $s$, i.e., {\it SearchNode}. The agent divides the neighbors of $s$ into $2^{|\gamma_1|}$ partitions. Each partition consisting of $\frac{deg(s)}{2^{|\gamma_1|}}$ neighbors. Then it searches $\Gamma_1$-th partition, where $\Gamma_1$ is the integer value of $\gamma_1$. If the treasure is found then the algorithm terminates. Otherwise a pebble is found at a node $v_1$ (say), set {\it SearchNode}= $v_1$ (as shown by the edge $(s,v_1)$ in Fig. \ref{Fig-StateDiagram} and in Fig. \ref{Fig-FlowChart}). 

        \item \label{milestone-c}From $v_1$, i.e., {\it SearchNode}. The agent searches $\Gamma_2$-th partition of neighbors of $v_1$ out of $2^{|\gamma_2|}$ partitions each consisting of at most $\frac{deg(v_1)}{2^{|\gamma_2|}}$ neighbors. There may be no pebble or at least one pebble found (refer the edges $(v_1,v_2)$, $(v_1,u_i)$ and $(v_1,w_i)$ in Fig. \ref{Fig-StateDiagram}).\\
        % If a single pebble is found at a node $v_2$ then set {\it SearchNode}=$v_2$ (refer edge $(v_1,v_2)$ with notation $\Gamma_2|1$ in Fig. \ref{Fig-StateDiagram} and in Fig. \ref{Fig-FlowChart}).\\
        If at least one pebble encountered, it means that the agent  encountered pebbles placed at the nodes $u_i\in U$ which are at the same level as $v_1$, and are meant for encoding (refer the edges $(v_1,u_i)$, where $u_i\in U$ in Fig. \ref{Fig-StateDiagram} and the edges denoted by $\Gamma_2|1$ in Fig. \ref{Fig-FlowChart}). The desired path from $v_1$ is towards $v_2$ which is at the next level of $v_2$, which has no pebble (refer the edge $(v_1,v_2)$ with red circle in Fig. \ref{Fig-StateDiagram}). Further there are also nodes $w_i\in W$ at the same as $v_1$, which do not contain any pebbles as well (refer the edge $(v_1,w_i)$ in Fig. \ref{Fig-StateDiagram}). Now, the agent is unable to determine among them which one is the `shortest' path towards the treasure. To determine this fact, the agent on visiting the nodes without pebbles one at a time performs the following task. From each of these nodes, the agent uses the string $\gamma_3$ to search their respective $\Gamma_3$ partition of its neighbors and encounters at least one pebble. From each of these nodes with pebbles, the agent further searches the nodes corresponding to the set $\mathcal{R}$. Now in this process, the agent may face some inconsistencies. These inconsistencies will in turn help the agent reject the wrong paths, i.e., along the pebbles with nodes (refer to the nodes $w_i$ in Fig. \ref{Fig-StateDiagram}) at the same level as $v_1$ and ultimately guide it towards the desired path.
        % Moreover the desired path from $v_1$ is towards $v_2$ which is at the next level of $v_1$, which has no pebble (refer the edge $(v_1,v_2)$ with red circle in Fig. \ref{Fig-StateDiagram}). Further there are also nodes $w_i\in W$ at the same as $v_1$, which do not contain any pebbles as well (refer the edge $(v_1,w_i)$ in Fig. \ref{Fig-StateDiagram}). Now, the agent is unable to determine among them which one is the `shortest' path towards the treasure. To determine this fact, the agent on visiting the nodes without pebbles one at a time performs the following task. From each of these nodes, the agent uses the string $\gamma_3$ to search their respective $\Gamma_3$ partition of its neighbors and encounters at least one pebble. Moreover from the pebbles encountered along their children, the agent uses the string $\gamma_4$ to search the $\Gamma_4$ partition of their respective neighbors as well as search the nodes corresponding to the ports in the set $\mathcal{R}$. It further uses the string $\gamma_5$ to perform a similar search from each pebble encountered from the search using the string $\gamma_4$.
        % Now in this process, the agent may face some inconsistencies. These inconsistencies will in turn help the agent reject the wrong paths, i.e., along the pebbles with nodes (refer to the nodes $w_i$ in Fig. \ref{Fig-StateDiagram}) at the same level as $v_1$ and ultimately guide it towards the desired path.
        
        The detailed process is explained as follows.
        
       So, irrespective of the number of nodes without pebbles is encountered from $v_1$ after searching $\Gamma_2$ partition of its neighbors, the agent visits each node without a pebble, one at a time by maintaining a stack. Then it searches its $\Gamma_3$ partition of its neighbors. If no pebble is encountered, then the agent returns to its parent. Otherwise, there can be a single or multiple pebbles encountered (refer to the edges with notation $\Gamma_3|1$ and $\Gamma_3|z$, respectively in Fig. \ref{Fig-StateDiagram}).\\
        \textit{If a single pebble is found}, then there are multiple possibilities, as shown by the edges denoted by $\Gamma_3|1$ in the Fig. \ref{Fig-FlowChart}. 
        \begin{description}
            % \item {\it P1:} The agent currently at one of the nodes in $u_i$, encounters a pebble at the same level, i.e., some other node in the set $u_i$ (refer the self-loop $(u_i,u_i)$ with notation $\Gamma^+_3|1$ in Fig. \ref{Fig-StateDiagram}).
            \item {\it P1:} The agent currently at some node $w_i\in W$, encounters a pebble at a node in the previous level (refer the edge $(w_i,s)$ with notation $\Gamma_3|1$ in Fig. \ref{Fig-StateDiagram}).
            \item {\it P2:} The agent currently at some node $w_i\in W$, encounters a pebble at a node in the same level (refer the edge $(w_i,u_i)$ with notation $\Gamma_3|1$ in Fig. \ref{Fig-StateDiagram}).
            % \item {\it Possibility-3:} The agent at one of the nodes in $u_i$, encounters a pebble at the next level (refer the edge $(u_i,v_2)$ with notation $\Gamma^+_3|1$ in Fig. \ref{Fig-StateDiagram})
            \item {\it P3:} The agent at the node $v_2$, encounters a pebble at a node in the next level of $v_2$, i.e., at $v_3$ (refer the edge $(v_2,v_3)$ denoted by $\Gamma^3|1$ as shown in Fig. \ref{Fig-StateDiagram}) which is indeed the desired path towards the treasure. 
        \end{description}
        \textit{If multiple pebbles are found}, then we have further possibilities, as shown by the edges denoted by $\Gamma_3|z$ in Fig. \ref{Fig-FlowChart}.
        \begin{description}
            \item {\it P1:} The agent currently at some node $w_i\in W$, encounters a pebble at a node in the previous level, i.e., at $s$ (refer the edge $(w_i,s)$ with notation $\Gamma_3|z$ in Fig. \ref{Fig-StateDiagram}) and all the remaining pebbles along the nodes in the same level, i.e., along $u_i$ (refer the edge $(w_i,u_i)$ with notation $\Gamma_3|z$ in Fig. \ref{Fig-StateDiagram}).
            \item {\it P2:} The agent currently at some node $w_i\in W$, encounters all the pebbles at a node in the same level, i.e., along $u_i$ (refer the edge $(w_i,u_i)$ with notation $\Gamma_3|z$ in Fig. \ref{Fig-StateDiagram}).
            
            \item {\it P3:} The agent currently at $v_2$, encounters a pebble at a node in the next level, i.e., at $v_3$ (refer the edge $(v_2,v_3)$ as shown in Fig. \ref{Fig-StateDiagram}) which is indeed the desired path towards the treasure. The remaining along the nodes in the previous level, as shown by the edge $(v_2,u_i)$ with notation $\Gamma_3|z$ in Fig. \ref{Fig-StateDiagram}.
        \end{description} 
        
        So, irrespective of the number of pebbles encountered, the agent visits each one of them and searches the nodes corresponding to the ports in the set $\mathcal{R}$. 
        {\it If no pebble is encountered}, then the agent is at $v_3$. In this case, the agent further searches the $\Gamma_4$ partition of its neighbors and encounters $v_4$ (refer the edge $(v_3,v_4)$ with notation $\mathcal{R}_4|0|\Gamma_4|1$ in Fig. \ref{Fig-StateDiagram}). From $v_4$, it further searches $\Gamma_5$ partition of its neighbors and finds $v_5$. It sets {\it SearchNode}=$v_5$. \\
        {\it If a single pebble is found} then we have the following possibilities.
        \begin{description}

            \item {\it P1:} If the agent is currently at some node in $u_i\in U$, then the pebble encountered is at the node $s$ (refer the edge $(u_i,s)$ in Fig. \ref{Fig-StateDiagram}). 
            
            \item {\it P2:} If the agent is at a node $v_3$, then the pebble encountered is at the node $v_4$ (refer the edge $(v_3,v_4)$ with notation $\mathcal{R}_4|1$ in Fig. \ref{Fig-StateDiagram}) which is the desired path. 
            \end{description}
            In this case, the agent searches the nodes corresponding to the ports in the set $\mathcal{R}$, from the node where a single pebble is encountered. Then we have further possibilities: 
            \begin{description}
                \item {\it P1:} If no pebble is encountered, then the agent is at $v_4$. In this case, the agent further searches the $\Gamma_5$ partition of its neighbors and encounters $v_5$ (refer the edge $(v_4,v_5)$ with notation $\mathcal{R}_5|0|\Gamma_5|1$ in Fig. \ref{Fig-StateDiagram}). It sets {\it SearchNode}=$v_5$.
                \item {\it P2:} If a single pebble is encountered (refer the edge $(v_4,v_5)$ with notation $\mathcal{R}_5|1$ in Fig. \ref{Fig-StateDiagram}), then this is the correct path, and the agent will reach to the node $v_5$, and set {\it SearchNode}=$v_5$.
                \item {\it P3:} If multiple pebbles are encountered, return to its parent (refer to all the crossed edges denoted by $\mathcal{R}_5|z$ in Fig. \ref{Fig-StateDiagram} and in Fig. \ref{Fig-FlowChart}).
            \end{description}
             
        {\it If multiple pebbles are found} along this search, then the agent returns to its parent, as referred by the crossed red edges denoted by $\mathcal{R}_4|z$  in Fig. \ref{Fig-StateDiagram} and in Fig. \ref{Fig-FlowChart}.
         
        In each case, by rejecting every wrong path (refered as crossed red edges in Fig. \ref{Fig-FlowChart}), the agent will ultimately return to the node $v_5$ (refer all the edges denoted as $\mathcal{R}_5|1$ and $\mathcal{R}_5|0|\Gamma_5|1$ in Fig. \ref{Fig-FlowChart}) and set {\it SearchNode}=$v_5$.

        \item \label{step-3dcD} Further from $v_5$, i.e., {\it SearchNode}. The agent searches the $\Gamma_6$ partition of $v_5$ and encounters a pebble at the node $v_6$. Then it sets {\it SearchNode}=$v_6$. This process continues until {\it SearchNode}=$v_{\alpha}$
    \end{enumerate}
    \item \label{step-4cD}If {\it SearchNode} is light, search all its neighbor until a pebble or treasure is encountered. If the treasure is found, then the algorithm terminates. If a pebble is found at a node $v_j$, set {\it SearchNode}=$v_j$, where ($j\ge 2$).
    \item \label{step-5cD}If {\it SearchNode} is a milestone, then it searches its corresponding half of neighbors determined by the algorithm \textsc{CheckerForMilestone}, and then go to step $2$.
   
\end{enumerate}

%\begin{figure}[h]
%\centering
%\includegraphics[width=0.7\textwidth]{}
%\caption{Possible cases from a fat node}
%\label{Fig-CaseFat}
%\end{figure}

\begin{remark}\label{note}
 In this remark, the idea behind the value of $\beta=10(c+1)+6$ is discussed. See the step \ref{milestone-c} of the \textsc{TreasureHuntForGraph} algorithm. Observe that at least $5$ binary strings are required to check that the agent is moving along the desired path $P$ from a milestone (refer the path $s\longrightarrow v_1 \longrightarrow v_2 \longrightarrow v_3 \longrightarrow v_4 \longrightarrow v_5$ in Fig. \ref{Fig-StateDiagram}). Otherwise the agent may circle inside a loop and never reach the treasure (refer the path $s\longrightarrow v_1 \longrightarrow w_i \longrightarrow s$ in Fig. \ref{Fig-StateDiagram}). Hence the value of $\alpha$ $\geq 5$. As discussed in {\it Pebble placement} strategy, to encode $\alpha$ many binary strings, we need at least $5 (c+1)+3$ neighbors in one half of the neighbors of the milestone. As each milestone is a fat node, hence total the degree of a fat node must be at least $2(5 (c+1)+3)=10(c+1)+6$.

\end{remark}

\begin{lemma}
Given $k=c D$ pebbles, the agent following the \textsc{TreasureHuntForGraph} algorithm successfully finds the treasure.
\end{lemma}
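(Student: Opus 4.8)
The plan is to prove correctness by induction on the \emph{phases} of the algorithm, where each phase advances the SearchNode strictly forward along the shortest path $P = v_0,\dots,v_D$. The induction invariant I would maintain is: at the start of each phase the SearchNode coincides with some node $v_i \in P$, the agent has correctly classified $v_i$ as light or milestone via \textsc{CheckerForMilestone}, and it holds the stored data (the set $\mathcal{R}$ and the decoded strings) it needs to navigate. With this invariant in place, the whole claim reduces to showing that one phase moves the SearchNode to a strictly later node of $P$ without ever committing to a node off $P$. Since $P$ has length $D$ and each phase increases the level index, finitely many phases suffice and the agent reaches $t=v_D$.

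First I would dispatch \textit{Case-1} (every node of $P$ is light). Here no encoding is placed and a single pebble sits at each level along $P$, so from SearchNode $=v_i$ the agent inspects the neighbours other than the parent port and finds the unique reachable pebble, which necessarily lies on $P$; it then resets SearchNode $=v_{i+1}$. Correctness is immediate since no off-path pebble is reachable in one step. The bulk of the work is \textit{Case-2}. I would first verify that \textsc{CheckerForMilestone} never misclassifies a node: a milestone deposits at least two pebbles among the five consecutive neighbours it inspects (the encoding occupies a contiguous block), whereas a light node of $P$ carries at most the one pebble marking the next path node, so the ``more than one'' versus ``single'' pebble test is decisive. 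The crucial geometric fact is the milestone-spacing condition of at least $5$ (Remark \ref{note}): no two milestones share a neighbour, so encoding pebbles of one milestone are never mistaken for those of another, and the decoded strings $\gamma_1,\dots,\gamma_\alpha$ faithfully describe the partition to search at each of the next levels.

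The hard part will be proving that the inconsistency-detection mechanism of step \ref{milestone-c} rejects \emph{every} spurious branch. After searching the $\Gamma_2$-partition the agent may land on the genuine path node $v_2$ (no pebble), on an off-path twin $w_i\in W$, or on an encoding node $u_i\in U$; following the directed graph of Fig.\ \ref{Fig-StateDiagram}, I would argue that exactly one continuation survives all subsequent tests. The tool is the set $\mathcal{R}$ together with the successive strings $\gamma_3,\gamma_4,\gamma_5$: along the true walk $v_2\to v_3\to v_4\to v_5$ the prescribed $\Gamma_j$-partition always yields a single pebble and the $\mathcal{R}$-check is consistent, whereas each wrong branch eventually produces zero pebbles, or two-or-more where the correct path gives one, or an $\mathcal{R}$-inconsistency, triggering a backtrack to the parent. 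I would make this rigorous by enumerating the finitely many edge labels $\Gamma_j\mid n$ and $\mathcal{R}\mid n$ appearing in the figure and checking that the unique surviving walk from a milestone is $s\to v_1\to v_2\to v_3\to v_4\to v_5$, so the phase terminates with SearchNode $=v_5$ and thereafter, using the remaining strings in step \ref{step-3dcD}, advances to $v_\alpha$.

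Because the candidate branches are explored one at a time via a stack and the agent always returns to the parent upon detecting inconsistency, no viable branch is discarded and the local search is exhaustive; combined with the forward-progress invariant this yields that the treasure is reached. The delicate point throughout is confirming that five strings genuinely suffice to disambiguate the true path from the returning branches $s\to v_1\to w_i\to s$ that would otherwise trap the agent in a loop --- which is exactly why $\alpha\geq 5$, and hence $\beta=10(c+1)+6$, are forced.
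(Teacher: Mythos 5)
Your proposal is correct and takes essentially the same route as the paper's proof: both split into the light-node case (trivial forward progress, since only the next path node among the non-parent neighbours can carry a pebble) and the milestone case, where correctness rests on the five strings $\gamma_1,\dots,\gamma_5$ together with the $\mathcal{R}$-checks of step \ref{milestone-c} rejecting every spurious branch (the $w_i$ and $u_i$ nodes) so that the agent provably reaches $v_5$ and hence the next milestone. The paper's write-up is simply terser --- it cites step \ref{milestone-c} and observes that $v_1,v_2$ are the only problematic nodes between milestones since they alone may have several pebbled neighbours --- whereas your induction invariant and explicit \textsc{CheckerForMilestone} correctness argument are a more formal packaging of the same skeleton.
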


\begin{proof}
To prove the correctness of the algorithm, we first ensure that the agent successfully moves from one milestone to another milestone along $P$. Secondly, we ensure that the agent moves towards the treasure along the path $P$, from a light node. Let us consider the {\it SearchNode} is $s$. We deal with the issues in the following manner:

\noindent \textit{ SearchNode is a milestone:} Observe that there are two possible nodes along  $P$ between two milestones, which may create a problem, i.e., these nodes may deviate the agent towards a wrong direction (refer to the nodes $v_1$ and $v_2$ in Fig. \ref{Fig-StateDiagram}). The reason being, these are the only nodes between two milestone along $P$ those may have more than one neighbors containing pebbles (refer the edges $(v_1,u_i)$ and $(v_2,u_i)$ in Fig. \ref{Fig-StateDiagram}). Now the strategy discussed in step \ref{milestone-c} of \textsc{TreasureHuntForGraph} algorithm, ensures that the agent always reaches the node $v_5$ along the path $P$ using the advice from 5 binary strings $\gamma_1,\cdots,\gamma_5$. This in turn implies that the agent successfully overcomes the two problematic nodes and moves along the desired direction. Hence, the agent by following the above algorithm successfully moves from one milestone node to another.
    
\noindent\textit{ SearchNode is light:} The agent searches all the neighbors of the {\it SearchNode} and finds a single pebble. It moves to that node containing the pebble and updates the {\it SearchNode} (refer step-\ref{light} of \textsc{TreasureHuntForGraph} algorithm).

Hence the above two explanations guarantee that the agent moves forward following the algorithm and successfully reaches the node where the treasure is residing.
\qed
\end{proof}

\begin{lemma}\label{lemma-intermediate}
The agent following \textsc{TreasureHuntForGraph} algorithm takes $\mathcal{O}\left({c(\frac{\Delta}{2^{\frac{c}{2}}})}^2+c\right)$ time to reach from a milestone to another milestone.
\end{lemma}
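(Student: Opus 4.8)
The plan is to bound the number of edge traversals the agent makes while executing the milestone branch (Step~\ref{step-3cD} of \textsc{TreasureHuntForGraph}) from the moment a milestone becomes the \emph{SearchNode} until the next milestone does, by separating the work into a few phases and summing their costs. Throughout I abbreviate $q=\frac{\Delta}{2^{c/2}}$.

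First I would pin down the cost of the single primitive operation, namely searching one $\Gamma$-partition of a node's neighbours. Since every encoded string has transformed length at most $c-1$, each $\gamma_j$ has length at most $\frac{c-1}{2}$ (Section~\ref{pebbleplacement}), so a node $v$ is split into $2^{|\gamma_j|}$ partitions, each of size at most $\lceil deg(v)/2^{|\gamma_j|}\rceil = O(q+1)$. Visiting every neighbour of one partition (stepping to each and back) therefore costs $O(q+1)$; likewise one walk of the set $\mathcal R$ costs $|\mathcal R| = O(\alpha(c-1)+3)=O(c)$, and reading the $O(\alpha(c+1))$ encoding neighbours of the milestone to recover $\gamma_1,\dots,\gamma_\alpha$ costs $O(c)$, using that the number of strings guiding one inter-milestone gap is the constant $\alpha\ge 5$ of Remark~\ref{note}.

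Second, and this is the dominant phase, I would bound the disambiguation at the two problematic nodes $v_1,v_2$ — by the correctness lemma the only inter-milestone nodes that can have more than one pebble-bearing neighbour. At such a node the agent searches one partition, meeting at most $O(q)$ pebble-free candidates (the $w_i\in W$ together with the genuine successor); for each candidate it performs one further partition search, $O(q)$, and, upon finding a pebble, one $\mathcal R$-walk, $O(c)$, to test consistency and discard the wrong branches (the $\mathcal R|z$ rejections of Fig.~\ref{Fig-StateDiagram}). Hence the disambiguation costs $O\!\big(q\,(q+c)\big)=O(q^{2}+cq)$ per problematic node, and since $q^{2}+cq=O(cq^{2}+c)$ (distinguishing $q\ge 1$ from $q<1$), the two problematic nodes contribute $O(cq^{2}+c)$. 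The remaining guided nodes up to $v_5$ are unambiguous, costing one partition search $O(q)$ (or $O(c)$ if light, as each light node has degree below $\beta=10(c+1)+6$) apiece, which is absorbed.

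Summing the phases yields $O(cq^{2}+c)=O\!\big(c(\frac{\Delta}{2^{c/2}})^{2}+c\big)$, as claimed. I expect the genuine obstacle to be the second phase: one must argue that the nested search truly branches by only $O(q)$ — i.e. the number of pebble-free candidates the agent probes individually is bounded by a single partition's size, not by all $deg(v)$ neighbours — and that each wrong candidate is eliminated after exactly one extra partition search plus one $\mathcal R$-walk, with no candidate ever re-expanded. This is precisely where the consistency tests of Step~\ref{milestone-c} and the correctness lemma must be invoked to convert ``the wrong paths are rejected'' into ``the wrong paths are rejected cheaply.''
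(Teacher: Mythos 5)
Your proof follows essentially the same route as the paper's: a costing of the primitives (one partition search is $O(q)$ with $q=\frac{\Delta}{2^{c/2}}$, one $\mathcal{R}$-walk is $O(c)$, decoding the strings at the milestone is $O(c)$), followed by the observation that the disambiguation at the two problematic nodes $v_1,v_2$ dominates and everything else is absorbed; this matches the paper's three-phase accounting ($s\to v_1$, $v_1\to v_5$, $v_5\to v_{\alpha}$). However, one step of your dominant phase is wrong as stated. You charge each pebble-free candidate one partition search plus \emph{one} $\mathcal{R}$-walk, i.e.\ $O(q+c)$, for a per-node total of $O(q^2+cq)$. But Step \ref{milestone-c} of \textsc{TreasureHuntForGraph} has the agent, ``irrespective of the number of pebbles encountered,'' visit \emph{every} pebble found in the candidate's $\Gamma_3$-partition and run an $\mathcal{R}$-walk from each one; a wrong candidate $w_i$ may uncover many pebbled encoding nodes $u_j\in U$ in that partition (up to its size, $O(q)$), hence up to $O(q)$ separate $\mathcal{R}$-walks, i.e.\ $O(q+qc)$ per candidate rather than $O(q+c)$. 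The correct worst-case charge for the disambiguation is therefore $O(q)\cdot O(q+qc)=O\bigl(cq^2\bigr)$ --- precisely the $\mathcal{O}\bigl(c(\frac{\Delta}{2^{c/2}})^2\bigr)$ term the paper's proof assigns to the $\mathcal{R}$-walks, on top of $O(q^2)$ for the nested partition searches. In particular your closing claim, that ``each wrong candidate is eliminated after exactly one extra partition search plus one $\mathcal{R}$-walk,'' is false; what holds is one partition search plus one $\mathcal{R}$-walk \emph{per pebble uncovered}, with no candidate re-expanded.

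The slip happens to be harmless to the conclusion, since $O(cq^2)$ still lies within the claimed bound $O(cq^2+c)$; the lemma survives once the per-candidate charge is repaired, but as written your intermediate estimate $O(q^2+cq)$ under-counts the algorithm's worst case (already for $1\ll q\le c$ it is off by nearly a factor of $c$), and that is exactly the point you flagged as the crux. A smaller omission: you stop the analysis at $v_5$, whereas the lemma concerns reaching the \emph{next milestone} $v_{\alpha}$; the paper's third phase covers this remaining segment, where each further step is a single unambiguous partition search of cost $O(q)$ (or an exhaustive scan of cost $O(\beta)=O(c)$ at a light node), which is likewise absorbed into the stated bound.
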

\begin{proof}
Consider $s$ as the first milestone on $P$ and the second milestone is a node $v_{\alpha}$ along $P$ which is located at $\alpha$ distance from $s$ (where $\alpha\ge 5$) along $P$. The time taken by the agent to reach $v_{\alpha}$ from $s$ is explained in the following cases:
\begin{itemize}
    \item {\it $s$ to $v_1$}: The agent starting from $s$, finds it a milestone, then obtains the $\alpha$ many binary strings (refer step \ref{step-3acD} in \textsc{TreasureHuntForGraph} algorithm), by visiting at most $\alpha(c+1)+3$ neighbors (refer pebble placement strategy \ref{pebbleplacement}) in $(\alpha(c+1)+3)$ time. It further searches its $\Gamma_1$ partition consisting of at most $\frac{\Delta}{2^{\frac{c-1}{2}}}$ neighbors and finds $v_1$ (refer step \ref{step-3bcD} in \textsc{TreasureHuntForGraph} algorithm) in $\mathcal{O}(\frac{\Delta}{2^{\frac{c-1}{2}}})$ time. So, the total time taken to reach $v_1$ is $\mathcal{O}(\frac{\Delta}{2^{\frac{c-1}{2}}}+\alpha(c+1)+3)=\mathcal{O}(\frac{\Delta}{2^{\frac{c}{2}}}+c)$.
    \item {\it $v_1$ to $v_5$:} The agent from $v_1$, searches its $\Gamma_2$ partition of neighbors in $\mathcal{O}(\frac{\Delta}{2^{\frac{c}{2}}})$ time. Now as no pebble is placed on $v_2$ along $P$ (refer pebble placement \ref{pebbleplacement}), the agent reaches each node without pebble and searches its corresponding $\Gamma_3$ partition of neighbors (refer step \ref{milestone-c} of \textsc{TreasureHuntForGraph} algorithm) in $\mathcal{O}({(\frac{\Delta}{2^{\frac{c}{2}}})}^2)$ time. Then from each pebble encountered it searches the nodes corresponding to the set $\mathcal{R}$ (refer step \ref{milestone-c} of \textsc{TreasureHuntForGraph} algorithm). The time required to perform this search is $\mathcal{O}({c(\frac{\Delta}{2^{\frac{c}{2}}})}^2)$. Now, the agent after cancelling all the inconsistent paths (refer step \ref{milestone-c} of \textsc{TreasureHuntForGraph} algorithm) may either reach $s$ or the node $v_4$.
    Further the agent searches the nodes corresponding to the set $\mathcal{R}$ from the encountered pebble. If a single pebble is found, then that node corresponding to the pebble encountered is $v_5$. If no pebble is found, then it searches  $\Gamma_5$ partition along its neighbor to finally reach $v_5$ in $\mathcal{O}(\frac{\Delta}{2^{\frac{c}{2}}}+c)$ time. So, the total time taken by the agent to reach $v_5$ from $v_1$ is $\mathcal{O}(c{(\frac{\Delta}{2^{\frac{c}{2}}})}^2+c))$.
    \item {\it $v_5$ to $v_{\alpha}$:} The agent after reaching $v_5$, searches its $\Gamma_6$ partition of neighbors and finds $v_6$ in $\mathcal{O}(\frac{\Delta}{2^{\frac{c}{2}}})$ time. The agent performs similar searches and ultimately reaches $v_{\alpha}$. The total time taken is $\mathcal{O}(\frac{\Delta}{2^{\frac{c}{2}}})$. 
\end{itemize}
Hence, the total time taken to reach $v_{\alpha}$ from $s$ is $\mathcal{O}\left(c{(\frac{\Delta}{2^{\frac{c}{2}}})}^2+c\right)$.
\qed
\end{proof}

\begin{theorem}
Given $k=c D$ pebbles, the agent following the \textsc{TreasureHuntForGraph} algorithm finds the treasure in $\mathcal{O}\left[cD{(\frac{\Delta}{2^{{c}/{2}}})}^2 + cD\right]$ time.
\end{theorem}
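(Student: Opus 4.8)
The plan is to obtain the total running time by decomposing the path $P$ into segments delimited by milestones and summing the per-segment costs supplied by Lemma \ref{lemma-intermediate}, together with the (cheap) cost of traversing maximal stretches of light nodes. Correctness — that the agent actually reaches $t$ — is already guaranteed by the preceding correctness lemma, so here I would only account for time.

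First I would bound the number of milestones. By the recursive definition, consecutive milestones are at least $5$ apart along $P$, and since $P$ has length $D$, there are at most $\lceil D/5\rceil = \mathcal{O}(D)$ milestones, hence $\mathcal{O}(D)$ milestone-to-milestone segments. For each such segment the agent, by Lemma \ref{lemma-intermediate}, spends $\mathcal{O}\!\left(c\big(\tfrac{\Delta}{2^{c/2}}\big)^2 + c\right)$ time to advance from one milestone to the next along $P$; this bound already absorbs the decoding of the $\alpha$ strings, the partition searches governed by $\gamma_1,\dots,\gamma_5$, the exploration of the pebble-free node $v_2$, and the backtracking over the inconsistent branches through nodes $w_i$ and $u_i$.

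Next I would account for the light-node portions, i.e.\ \emph{Scenario-1}/\emph{Case-1} and the light prefixes preceding a milestone in \emph{Case-C}. On a light node the agent (step \ref{step-4cD}) merely scans all its neighbours, which costs at most $\deg(\cdot) < \beta = 10(c+1)+6 = \mathcal{O}(c)$. Since $P$ contains at most $D$ such nodes, the cumulative light-node cost is $\mathcal{O}(cD)$, dominated by the $cD$ term of the claimed bound. In particular, if $P$ contains no fat node at all (\emph{Scenario-1}), the whole hunt costs $\mathcal{O}(\beta D)=\mathcal{O}(cD)$, consistent with the stated bound. Combining the two contributions, multiplying the per-milestone cost by the $\mathcal{O}(D)$ milestones and adding the light-node term gives
\[
\mathcal{O}(D)\cdot\mathcal{O}\!\left(c\Big(\tfrac{\Delta}{2^{c/2}}\Big)^2 + c\right) + \mathcal{O}(cD)
= \mathcal{O}\!\left(cD\Big(\tfrac{\Delta}{2^{c/2}}\Big)^2 + cD\right),
\]
which is exactly $\mathcal{O}\!\left[cD(\tfrac{\Delta}{2^{c/2}})^2 + cD\right]$.

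The one place demanding care — the main obstacle — is making the segment decomposition airtight: I must verify that every edge the agent traverses is charged to exactly one segment (a milestone-to-milestone segment or a light stretch), so that the per-segment bounds of Lemma \ref{lemma-intermediate} sum without double counting. In particular I would check the final, possibly shorter, segment ending at $t$ — where the last milestone may lie within distance less than $5$ of $t$, or the tail may consist of light nodes — and confirm it still fits under the same per-segment bound rather than introducing an extra asymptotic factor. Everything else is a routine summation.
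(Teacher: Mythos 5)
Your proposal is correct and follows essentially the same route as the paper's own proof: invoke Lemma \ref{lemma-intermediate} for the cost of each milestone-to-milestone segment, observe that milestones along $P$ (length $D$) give $\mathcal{O}(D)$ such segments, and multiply to obtain $\mathcal{O}\left[cD{(\frac{\Delta}{2^{c/2}})}^2 + cD\right]$. Your explicit accounting of the light-node stretches (bounded by $\beta D = \mathcal{O}(cD)$) and of the final, possibly shorter, segment only makes explicit what the paper leaves implicit; it does not change the argument.
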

\begin{proof}
By lemma \ref{lemma-intermediate}, the agent takes $\mathcal{O}\left(c{(\frac{\Delta}{2^{\frac{c}{2}}})}^2+c\right)$ time to reach from one milestone to another, placed $\alpha$ distance apart. In the worst case, the agent may find a milestone placed successively at a gap of $\alpha$ distance apart, from the last milestone along the path $P$. Hence, the total time taken by the agent to reach the treasure is $\mathcal{O}\left[cD{(\frac{\Delta}{2^{{c}/{2}}})}^2 + cD\right]$.

\qed
\end{proof}

\section{Lower Bound}\label{LowerBound}

In this section, we provide a lower bound result on time of treasure hunt for the case when the number of pebbles $k$ is at most $D-1$.

Let $T$ be a complete tree with $n$ nodes and of height $D$ where the degree of the root $r$ and each internal node is $\Delta$. There are $\Delta \cdot (\Delta-1)^{D-1}$ leaves in $T$. Let $p= \Delta \cdot (\Delta-1)^{D-1}$ and  $u_1, \ldots, u_p$ be the leaves of $T$ in the lexicographical ordering of the shortest path from the root $r$ to the leaves. For $1 \le i\le p $, we construct an input $B_i$ as follows. The tree $T$ is taken as the input graph, $r$ as the starting point of the agent, and $u_i$ as the position of the treasure.  Let $\cal B$ be the set of all inputs $B_i$, $1 \le i\le p$. Let $\cal A$ be any deterministic treasure hunt algorithm executed by the mobile agent and let $\cal L$ be any pebble placement algorithm for the set of instances $\cal B$.  We prove the following theorem.

\begin{theorem}\label{th:lowerbound}
There exists a tree  with maximum degree $\Delta$ and diameter $D$ such that any deterministic algorithm must require $\Omega((\frac{k}{e})^{\frac{k}{k+1}}(\Delta-1)^{\frac{D}{k+1}})$-time for the treasure hunt using at most $k$ pebbles placed on the nodes of $T$.
\end{theorem}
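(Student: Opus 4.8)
The plan is to use an adversary-style counting argument showing that no pebble-placement scheme can simultaneously guide the agent quickly to all $p=\Delta(\Delta-1)^{D-1}$ possible treasure positions. The starting point is that a deterministic agent's entire trajectory is a function only of the port numbers it reads and of the pebbles it encounters. Hence, before the agent reaches the nodes carrying pebbles, its walk is identical across every input $B_i\in\mathcal{B}$, and the only way the placement algorithm $\mathcal{L}$ can steer it toward a particular leaf $u_i$ is through the set of at most $k$ nodes on which pebbles are deposited. The information available to the agent is therefore bounded by the number of distinguishable pebble configurations.

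First I would quantify that information. Suppose that, in the worst case over $\mathcal{B}$, the pebble-bearing nodes lie in a region of $x$ candidate nodes that the agent must inspect; inspecting them already costs $\Omega(x)$ edge traversals. The number of placements of at most $k$ pebbles among these $x$ nodes is at most $\binom{x}{k}$, so there are at most $\binom{x}{k}$ distinguishable advice values but $p$ distinct treasure positions. By the pigeonhole principle some configuration is consistent with at least $p/\binom{x}{k}$ leaves, and for that configuration the agent cannot tell these leaves apart from the advice alone; in the worst case it is forced to probe $\Omega(p/\binom{x}{k})$ of them, each at distance $D$. This yields $T\ge x + p/\binom{x}{k}$ for the running time of any deterministic $\mathcal{A}$.

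Next I would optimize over $x$, treating the above as a lower bound on $T$ valid for whatever region the algorithm effectively commits to: $T\ge \min_x\bigl(x+p/\binom{x}{k}\bigr)$. Using the sharp inequality $\binom{x}{k}\le (ex/k)^{k}$, the balance point $x\binom{x}{k}=p$ forces $x^{k+1}(e/k)^{k}\ge p$, hence $x\ge (k/e)^{k/(k+1)}\,p^{1/(k+1)}$. Since $p=\Delta(\Delta-1)^{D-1}\ge (\Delta-1)^{D}$, substituting gives $x\ge (k/e)^{k/(k+1)}(\Delta-1)^{D/(k+1)}$, and feeding this back into $T\ge x$ produces the claimed bound $\Omega\left((k/e)^{k/(k+1)}(\Delta-1)^{D/(k+1)}\right)$. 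To make this airtight I would run the two-case analysis explicitly: if $x$ exceeds the target then $T\ge x$ already suffices, and otherwise $\binom{x}{k}$ is small enough that $p/\binom{x}{k}$ dominates; the factor $(\Delta-1)$ rather than $\Delta$ comes from counting leaves as $(\Delta-1)^{D-1}$, and the $(k/e)$ factor from the binomial estimate (improving the cruder $k^{k/(k+1)}$ of the analogous general-graph bound).

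The hard part will be making the indistinguishability step fully rigorous. I must show that a deterministic agent receiving identical pebble observations behaves identically on all inputs consistent with that configuration, so it cannot accidentally shortcut to the correct leaf. This requires pinning down precisely what the agent sees — incoming and outgoing port labels together with presence or absence of a pebble — and verifying that the port labeling of $T$ can be fixed adversarially so that the $x$ candidate pebble positions and the $p/\binom{x}{k}$ candidate leaves genuinely look the same until the treasure is reached. I would also need to justify that any placement scheme legitimately reduces to committing to a region of $x$ nodes, and that splitting $T$ additively into the cost $x$ of scanning pebble positions and the cost of probing candidate leaves does not double-count edge traversals.
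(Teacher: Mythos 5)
Your proposal follows essentially the same route as the paper's own proof: the same complete tree with $p=\Delta(\Delta-1)^{D-1}$ leaf instances, the same count of at most $\binom{x}{k}$ pebble placements over a set of $x$ candidate nodes, the same pigeonhole step forcing some placement to serve $p/\binom{x}{k}$ leaves, and the same optimization via $\binom{x}{k}\le(ex/k)^k$ at the balance point $x\binom{x}{k}=p$ (your additive bound $T\ge x+p/\binom{x}{k}$ versus the paper's $\mathcal{T}\ge\max\{x,p/\binom{x}{k}\}$ differ only by a constant factor). The indistinguishability concern you flag at the end is real but is equally left informal in the paper, so your plan is faithful to, and no weaker than, the published argument.
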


\begin{proof} We define a set $W$ that contains distinct elements of $T$ such that $W= \{v \in T:~ \cal L~\text{places a pebble on}$ $v$ $\text{for some instance}~B_i \in $\cal B$\}$. Let $x$ be the cardinality of $W$, where $x\leq n$. The number of ways one can put $k$ pebbles in $x$ nodes is $x\choose k$. So the number of different possible placements of pebbles can be at most $x\choose k$. As we have $p$ many instances in  $\cal B$, by Pigeonhole principle, there exists at least one way of pebble placement which is used to find the treasure in at least $\frac{p}{\binom{x}{k}}$ many input instances from the set $\cal B$. Let $\mathcal{T}$ be the time taken by $\cal A$ to find the treasure. As the agent may need to look for the treasure on $\frac{p}{\binom{x}{k}}$ many different leaf nodes for some placements of pebbles, we can say $\mathcal{T}\geq \frac{p}{\binom{x}{k}}$.
On the other hand, $\mathcal{T}\geq x$ must hold for some instance in $\cal B$. From these two inequalities, we have, 
$\mathcal{T}\geq max\{x, \frac{p}{\binom{x}{k}}\}$. So, to find the least value of $x$ and hence $\mathcal{T}$, we need to solve the equation $x=\frac{p}{\binom{x}{k}}$. Putting the value of $p$, we get, $x\binom{x}{k}= \Delta \cdot (\Delta-1)^{D-1}$. Using the inequality $(\frac{xe}{k})^k\geq \binom{x}{k}$, we have, $\frac{x^{k+1}e^k}{k^k}\geq (\Delta-1)^D$. Finally, we get, $x\geq (\frac{k}{e})^{\frac{k}{k+1}}(\Delta-1)^{\frac{D}{k+1}}$. Hence the theorem. \hfill $\qed$
\end{proof}

\section{Conclusion}\label{conclude}
% Gorain et al. \cite{gorain2021pebble}, studied the treasure hunt problem in a graph when infinite number of pebbles are provided by the oracle and gave an algorithm that finds treasure in $\mathcal{O}(D\log\Delta+\log^3\Delta)$ time. We studied the trade-off between the number of pebbles provided and the time required to find the treasure in anonymous graph. So, for different values of $k$, we proposed different algorithms that finds the treasure in anonymous network. Firstly, when $k<\frac{D}{2}$, we propose an algorithm in $\mathcal{O}(k\Delta^{\frac{D}{k}})$ time in a bipartite graph. Further, we generalize the idea for general graphs $\mathcal{O}(D \Delta^{\frac{D-\eta}{(2\eta+2)}})$ time using $k$ pebbles, where $\eta=\frac{k}{3}$. Then we further provide a close to the optimal algorithm when $\frac{D}{2}\leq k < D$, and give a lower bound analysis when $k<D$. Next we deal with the case when $D \leq k$, and correspondingly provide an algorithm in $\mathcal{O}(\frac{D \Delta}{2^{{c}/{2}}}+cD)$ time. In future, we wish to give a lower bound for $D \leq k$. An interesting direction is to extend this idea of finding treasure with multiple agents by collaboratively searching the whole domain with different modes of communication and with the use of pebbles. Another possibility is extending this idea of studying treasure hunt with pebbles on plane or on geometric terrains with obstacles.

In this paper we study trade-off between number of pebbles $k$ and time for treasure hunt for $k = cD$, where $c\ge 1$. For $k <D$, our propose upper bound and lower bound on time of treasure hunt are close. For $k = cD$, we propose an algorithm for treasure hunt. Therefore, proving a tight lower bound result for both of the above cases is natural problem to solve in future. On the other hand, as the previous result \cite{gorain2021pebble} proves that the fastest possible treasure hunt algorithm can be achieved with $O(D \log \Delta)$ pebbles, it will be interesting to investigate the case when $k \in w(D)$ and $k \in o(D \log \Delta)$. 

We propose algorithms which have close upper and lower bound, when $k<D$. In future we will like to provide a more tighter lower bound. Further, when $k\ge D$, we have given only upper bound. A possible future work will be to propose a lower bound for this proof.
\bibliographystyle{splncs04}
\bibliography{bibliog}
\end{document}